\documentclass[reqno]{amsart}
\usepackage{graphicx} 
\usepackage{enumerate}
\usepackage{bbm}
\usepackage{color}
\usepackage{amsaddr}
\usepackage{amsthm}
\usepackage{bm}
\usepackage{mathtools}
\usepackage{enumitem}
\usepackage{etoolbox}
\usepackage[
        doi=false,
	backend=biber,
	style=alphabetic,
	]{biblatex}
\addbibresource{thesis.bib}
\NewBibliographyString{bywithappendix}
\DefineBibliographyStrings{english}{
  bywithappendix = {Appendix to},
}
\DeclareFieldFormat{title}{\mkbibquote{#1}}
\usepackage{hyperref}
\hypersetup{
    colorlinks=true,
    linkcolor=blue,
    citecolor=blue,
    pdfpagemode=FullScreen,
    }
\usepackage[capitalise,noabbrev,nameinlink]{cleveref}
\makeatletter
\renewcommand{\eqref}[1]{%
  \hyperref[#1]{\textup{\tagform@{\ref*{#1}}}}%
}
\makeatother
\urlstyle{same}
\makeatletter
\newcommand{\customlabel}[2]{%
   \protected@write \@auxout {}{\string \newlabel {#1}{{#2}{\thepage}{#2}{#1}{}} }%
   \hypertarget{#1}{}
}
\makeatother
\title[Shape of term structures]{Shape of term structures compatible with flexible choice of diffusion}

\author{Andreas Celary$^{\dag}$, Paul Kr\"uhner$^{\dag *}$}
\address{$^\dag$Institute for Statistics and Mathematics, WU-University of Economics and Business}
\email{acelary@wu.ac.at, peisenbe@wu.ac.at \normalfont{$^*$Corresponding author}}
\newtheorem{theorem}{Theorem}[section]
\newtheorem{definition}[theorem]{Definition}
\newtheorem{lemma}[theorem]{Lemma}

\newtheorem{proposition}[theorem]{Proposition}

\newtheorem{corollary}[theorem]{Corollary}
\newtheorem{remark}[theorem]{Remark}
\newtheorem{example}[theorem]{Example}

\Crefname{theorem}{Theorem}{Theorems}
\Crefname{theoremenumi}{Theorem}{Theorems}
\AtBeginEnvironment{theorem}{%
    \crefalias{enumi}{theoremenumi}%
    \setlist[enumerate,1]{
        label={\textit{\roman*)}},
        ref={\thetheorem.\roman*)}
    }%
}

\Crefname{lemma}{Lemma}{Lemmas}
\Crefname{lemmaenumi}{Lemma}{Lemmas}
\AtBeginEnvironment{lemma}{%
    \crefalias{enumi}{lemmmaenumi}%
    \setlist[enumerate,1]{
        label={\textit{\roman*)}},
        ref={\thelemma.\roman*)}
    }%
}

\Crefname{definition}{Definition}{Definitions}
\Crefname{definitionenumi}{Definition}{Definitions}
\AtBeginEnvironment{definition}{%
    \crefalias{enumi}{definitionenumi}%
    \setlist[enumerate,1]{
        label={\textit{\roman*)}},
        ref={\thedefinition.\roman*)}
    }%
}

\Crefname{remark}{Remark}{Remarks}
\Crefname{remarkenumi}{Remark}{Remarks}
\AtBeginEnvironment{remark}{%
    \crefalias{enumi}{remarkenumi}%
    \setlist[enumerate,1]{
        label={\textit{\roman*)}},
        ref={\theremark.\roman*)}
    }%
}

\newcommand{\<}{\langle}
\renewcommand{\>}{\rangle}

\newcommand{\R}{\mathbb{R}}

\DeclareMathOperator{\J}{J}   

\DeclareMathOperator{\Sg}{S}  
\DeclareMathOperator{\Tr}{Tr}  
\DeclareMathOperator{\ran}{ran}  
\DeclareMathOperator{\inn}{int}  
\DeclareMathOperator{\dom}{dom}  
\DeclareMathOperator{\rank}{rank}  
\DeclareMathOperator{\Span}{Span}  
\DeclareMathOperator{\supp}{supp}  
\DeclarePairedDelimiterX\set[1]\lbrace\rbrace{\def\given{\;\delimsize\vert\;}#1} 

\begin{document}

\begin{abstract}
   We identify all smooth manifolds of curves for Heath-Jarrow-Morton models that are consistent with any tangential diffusion coefficient. In fact, we show that these manifolds cannot be affine but must be of linear-rational type.

\smallskip
\noindent \textbf{Keywords:} Heath-Jarrow-Morton models, term structure models, invariant manifolds, finite-dimensional realisations

\end{abstract}

\maketitle

\section{Introduction}

In fixed-income markets, the basic financial instruments are securities that are based on interest rate payments. A common example among these are government bonds, which typically entail the holder to a fixed coupon, which is a regular payment, and an additional nominal at the end of the contract. The two most principle approaches to model all of the government bonds at once are the Heath-Jarrow-Morton (HJM) approach \cite{hjm} and the LIBOR market model \cite{brace_al_LIBOR}. In the HJM approach an underlying infinite-dimensional process $(f(t,T))_{0\leq t\leq T}$ is modelled, where $f(t,T)$ is called the \emph{forward rate} at time $t$ and time of maturity $T$. In this framework, the zero-coupon bond price at time $t$ with contract maturity date $T$ is defined via
\begin{equation*}
	P(t,T)=e^{ -\int _t^Tf(t,s)ds}.
\end{equation*}
 It is convenient to use (with slight abuse of notation) the parametrisation $f_t(x):=f(t,t+x)$ due to \cite{musiela} which expresses the forward rate in terms of time and time-to-maturity instead of time and maturity. One can consider the forward rate as a curve-valued object and arrive at the Heath-Jarrow-Morton-Musiela (HJMM) formulation. The original HJM-drift condition which ensures absence of arbitrage has been reformulated in \cite{musiela} into the time-to-maturity formulation.
 
 In order to make these models computable, one possibility is to set up a finite factor model $(Z_t)_{t\geq 0}$ with $f_t = g(Z_t,\cdot)$ for some deterministic function $g$. A more principled approach to this is discussed in \cite{bjoerk} where, instead of a finite-parameter set of curves, a manifold $\mathcal M$ of curves is used. \cite[p.\ 17]{bjoerk} asks three questions which are all about the relation of a forward rate curve model $(f_t)_{t\geq 0}$ and some manifold of curves $\mathcal M$, as for instance: Under which condition does a given forward rate curve model stay on a prescribed manifold of curves?  

From a more general standpoint, manifolds which remain invariant under (stochastic) dynamical systems are an important object of study as a solution space of (stochastic) partial differential equations ((S)PDEs). In mathematical finance those invariant manifolds arise naturally when considering finite-dimensional realisations of (potentially infinite-dimensional) consistent term structure models (see, e.g. \cite{bjoerk1, bjoerk2, bjoerk3, bjoerk_svensson, filip, Tappe, tappe2, tappe4}). There are several motivations for considering finite-dimensional realisations. For instance, finite dimensional models are easier simulated in a computer and if there is a finitely parametrised family of curves it is easy to statistically decide for a curve from data than in some infinite dimensional case. 

We are interested in the question of when, for a given manifold of curves $\mathcal M$, a model for the forward rate curve can be constructed. This problem was analysed in \cite{filipovic_invariant}, where equivalent conditions have been derived. In particular, any diffusion model for $f$ which is compatible with a given manifold $\mathcal M$ must have at least a diffusion coefficient which is tangential to $\mathcal M$. Additional results, as in e.g. \cite{tappe_invariant} further characterise invariant finite-dimensional manifolds for SPDEs describing the evolution of the forward curve $f$. Examples of finite-dimensional manifolds which are invariant under consistent HJMM dynamics are known. Most notably, affine term structures which are finite dimensional affine spaces have been studied in e.g. \cite{teichmann1, teich} where the authors show that the only HJMM-type models which admit a finite-dimensional realisation for essentially any initial curve $f_0$ are affine in the direction of the diffusion coefficient. In contrast to our problem specification, the authors construct invariant (finite-dimensional) manifolds given a diffusion coefficient, which turn out to be necessarily affine.


Under the assumption of a finitely parametrised set of curves $\mathcal M=\{g(z,\cdot):z\in\mathbb R^d\}$ with a smooth $g$, any $\mathbb R^d$-valued diffusion $Z$ satisfies $f_t:=g(Z_t,\cdot)\in\mathcal M$ by construction and naturally has the property that the diffusion coefficient of $f$ is tangential to $\mathcal M$. In fact, the diffusion coefficient of $f$ can be easily represented via the diffusion coefficient of $dZ_t=\beta_tdt+\sigma_tdW_t$ through an application of It\^o's formula:
\begin{equation*}
        df_t = (\partial _1g(Z_t,\cdot)\beta_t+\frac12\partial _{11}g(Z_t,\cdot)\sigma _t^2)dt + \partial_1g(Z_t,\cdot)\sigma_t dW_t. 
\end{equation*}
Given a fixed a curve fitting method for identifying the underlying forward rate curve given market data, the possible set of curves $\mathcal M=\{g(z,\cdot):z\in\mathbb R^d\}$ is typically finitely parametrised. In a second step, one might want to estimate the implicit diffusion coefficient of the parameter process from data and use the result of the estimation for a model for $Z$. 
This would require that the diffusion coefficient obtained by the method in question is indeed compatible with the given manifold $\mathcal M$. Given these considerations, our main question is this:
\begin{itemize}
    \item[(\textbf{Q})] Which smooth manifolds $\mathcal M$ of curves have consistent HJM-type models for \emph{any} possible tangential diffusion coefficient?
\end{itemize}
The precise mathematical statement is summarised in \Cref{thm:main_result} below.




The problem of calibrating financial models is well known in mathematical finance. In particular, calibration of term structure models has to be done in a careful way to preserve no-arbitrage conditions specified by the respective model. Recalibrating models frequently may lead to violations thereof. In \cite{richter_teichmann} the authors consider discrete-time recalibration schemes which generate arbitrage-free models for any time via so-called forward characteristics. Applications for term structure models of bonds and volatility curves are considered. To calibrate the term structure of interest rates, often times parametric families of curves are calibrated to properly interpolate observed prices. It is well-known (see, e.g. \cite{filipovic_consistency}) that these families have to be chosen carefully, as otherwise dynamic consistency fails to be maintained. More recent approaches have been studied, as in e.g. \cite{jarrow_wu}, where the authors attempt to employ a smoothing procedure for cubic splines to fit yield curves. To maintain dynamic consistency, the authors add higher-degree polynomials to the linear span of functions. In \cite{autoencoder}, the authors use deep-learning methods to obtain a low-dimensional representation of the term structures via autoencoders. These, however, do not necessarily stay on the invariant manifold generated by the consistent model and instead remains within a vicinity dependent on the volatility. 

In this article, we characterise all invariant finite-dimensional manifolds for the HJMM-equation which are compatible with any possible tangential diffusion coefficient. Our main result is that such a manifold must have a special rational structure. As a corollary, we find that an affine term structure where all tangential diffusion coefficients are compatible consists of a single point and implies, in fact, a deterministic model.

The article is structured as follows: \Cref{sect:prelim} provides a brief introduction to our setting in the context of invariant manifolds for interest rates. In \Cref{sect:main} we formulate our main result. The proofs for our main statements are relegated to \Cref{sect:proof} which concludes the treatment of the topic. \Cref{sect:tech_tools} contains technical tools we make use of throughout the paper. 

\section{Preliminaries}\label{sect:prelim}
\subsection{Notation}

Throughout this paper, let $(\Omega,\mathfrak A,(\mathcal F_t)_{t\geq 0},\mathbb P)$ be a complete and right-continuous filtered probability space supporting a $d$-dimensional standard Brownian motion $W$. Let $X$ and $Y$ be sets. We will denote by $\mathfrak F(X,Y)$ the space of functions from $X$ to $Y$. Furthermore, we will use the notation $\mathfrak M(X,Y)$ for the space of measurable functions from $X$ to $Y$, $\text{Lip}(X,Y)$ the space of Lipschitz-continuous functions from $X$ to $Y$, $C^k(X,Y)$ the space of $k$-times continuously (Fr\'echet) differentiable functions form $X$ to $Y$ and $L(X,Y)$ the space of linear functions from $X$ to $Y$. Let $(\mathcal{H},|\cdot|)$ be a separable Hilbert space of functions from $\mathbb R_+$ to $\mathbb R$, that is $\mathcal H\subseteq\mathfrak F(\R _+,\R)$ with
\begin{itemize}
	\item[\textbf{(H1)}] $\delta_0:\mathcal{H}\rightarrow\mathbb R, h\mapsto h(0)$ is continuous linear.\customlabel{H1}{(H1)}
	\item[\textbf{(H2)}] $\mathcal \Sg_h:\mathcal{H}\rightarrow \mathcal{H}, f\mapsto (x\mapsto f(x+h))$ defines a $c_0$-semigroup $\mathcal (\Sg_h)_{h\geq 0}$ on $\mathcal{H}$ whose generator will be denoted by $\partial_x$, cf.\ \cite[pp.\ 6--8]{EK}.\customlabel{H2}{(H2)}
\end{itemize}
We denote the scalar product on $\mathcal{H}$ by $\<h_1,h_2\>_{\mathcal H}$ for $h_1,h_2\in \mathcal{H}$. We will make use of a Hilbert space $\mathcal H$ satisfying Assumptions \ref{H1} and \ref{H2} throughout the paper without explicitly referencing them. An example of a Hilbert space with those properties is the forward curve space $\mathcal H_w$ (see \cite{filipovic_consistency}) defined as follows: 
\begin{definition}\label{def:forward_curve_space}
    Let $w:\R _+\rightarrow [1,\infty )$ be a non-decreasing $C^1$-function such that
    \begin{equation*}
        w^{-1/3}\in L^1(\R _+).
    \end{equation*}
    Define the norm
    \begin{equation*}
        \lVert h\rVert _w^2:=\lvert h(0)\rvert ^2+\int _0^{\infty}\lvert h'(x)\rvert ^2w(x)dx
    \end{equation*}
    and set
    \begin{equation*}
        \mathcal H_w:=\set{ h\in L^1_{\emph{loc}}(\R _+)\given \exists h'\in L^1_{\emph{loc}}( \R_+)\emph{ and }\lVert h\rVert _w<\infty}.
    \end{equation*}
\end{definition}
We will denote the standard unit basis vectors in $\R ^d$ as $e_i$ for $i=1,\dots,d$ and the $d$-dimensional identity matrix as $\mathbbm 1_d$. The standard scalar product on $\R^d$ will be denoted by $\langle v,w\rangle$ for $v,w\in\R^d$. We will denote elements $v$ of $\R^d$ as a vector coordinate-wise in the form $v=(v_1,\dots,v_d)$. Additionally, for $v\in\R^{d_1}$ and $w\in\R^{d_2}$ we will write the vector $u\in\R^{d_1+d_2}$ obtained by concatenating $v$ with $w$ as $u=(v,w)$. Let $X$ and $Y$ be two (topological) sets, such that $0\in Y$. Given a function $f:X\rightarrow Y$, we will denote by $\supp(f)$ the support of the function $f$, that is the smallest closed set such that $f(x)=0$ for $x\in X\setminus\supp(f)$.

\subsection{Setting}

We are interested in characterising the finite dimensional manifolds $\mathcal M\subseteq\mathcal H$ which are invariant under the HJMM-equation relative to any diffusion coefficient which is tangential to the manifold. The motivation of the problem comes from a statistical consideration. If one has decided for a set of curves $\mathcal M$ and estimates the diffusion coefficient of a process running on $\mathcal M$, then we like to be sure to obtain a model which is free of arbitrage in the sense of NAFLVR (see \cite{cuchiero_naflvr}). Indeed, it is desirable for a calibrated model to remain within the admissible space of functions no matter the values of the estimated parameters. We proceed by making mathematically clear what we mean and summarise our main finding in Theorem \ref{thm:main_result} below. We will assume that the manifold is embedded into a Hilbert space of functions which is also used to formalise the HJMM-equation.

Assume now we are given a stochastic basis $(\Omega, \mathfrak A, (\mathcal F_t)_{t\geq 0}, \mathbb P)$ and a Hilbert space $\mathcal H$ fulfilling Assumptions \ref{H1} and \ref{H2}. We begin with an observation.

\begin{remark}
	Note that the point evaluation $\delta_x:=\delta_0 \Sg_x:\mathcal{H}\rightarrow \mathbb R$ is continuous linear for any $x\geq 0.$ Let $f\in\dom(\partial_x)$. Then we have
	\begin{align*}
		\partial_xf(x) &= \left(\lim_{h\searrow0}\frac{\Sg_hf-f}{h}\right)(x) \\
                      &= \lim_{h\searrow0}\frac{f(x+h)-f(x)}{h} \\
	\end{align*} 
	where the second equality follows from the continuity of $\delta_x$. This means that $\partial_x$ is the right-derivative and its domain contains only functions that have a right derivative and where the right-derivative defines an element of $\mathcal{H}$.
\end{remark}

For the sake of completeness and to be mathematically precise, we provide a (classical) definition of manifolds we will rely on for the rest of the paper.
\begin{definition}
	Let $n\in\mathbb N$. A non-empty subset $\mathcal M\subseteq\mathcal H$ is called a $d$-dimensional manifold of class $C^k$ if for all $m\in\mathcal M$ there exists an open neighbourhood $V\subset\mathcal H$ of $m$, an open set $U\in\mathbb R^d$ and a map $\chi\in C^k(U,\mathcal H)$ such that 
	\begin{enumerate}
		\item $\chi :U\rightarrow V\cap\mathcal M$ is a homeomorphism.
		\item $D\chi (y)$ has rank $d$ for all $y\in U$.
	\end{enumerate}
	The map $\chi$ is called a \emph{local parametrisation} of $\mathcal M$. 
	We shall denote by $T_m$ the \emph{tangent space} of $\mathcal M$ at $m\in\mathcal M$, that is $T_m:=\{ D\chi (y)v: v\in\mathbb R^d\}$ for $y\in U$ so that $\chi(y) =m$. For the exact definition we refer to \cite[Section 3]{tappe_invariant_appendix}.
\end{definition}

Since we are interested in invariant manifolds for consistent interest rate models, we now formally introduce the notion of the HJMM-equation. This will serve as a basis for further considerations in the theory. Indeed, we will tacitly assume that the stochastic processes we work with in the further parts of the paper satisfy the HJMM-equation. 
\begin{definition}\label{d:HJMM-equation}
	A stochastic process $f$ with values in $\mathcal{H}$ is a solution to the HJMM-equation with diffusion operator $\Sigma : \mathcal{H} \rightarrow L(\mathbb R^d,\mathcal{H})$ if
	\begin{enumerate}
		\item $\Sigma$ is measurable and locally bounded in the sense of \cite[Definition 3.5]{tappe_existence}.
		\item $f_t = \mathcal \Sg_tf_0 + \int_0^t \Sg_{t-s}\beta(f_s) ds + \int_0^t \Sg_{t-s}\Sigma(f_s)dW_s$ for all $t\geq 0$, \label{d:HJMM-equation2},
	\end{enumerate}
	where $\beta(h):=\frac12\Sigma(h)\int_0^{\cdot}(\delta_x\Sigma(h))^* dx$ and we assume $\beta (h)\in\mathcal H$. We will say that $f$ has starting value $f_0$.
\end{definition}
\begin{remark}
	If $f$ is a solution to the HJMM-equation with diffusion operator $\Sigma$, then it is a mild solution to
	\begin{equation}\label{eq:gen_sde}
		df_t = (\partial_xf_t+\beta(f_t))dt + \Sigma(f_t)dW_t
	\end{equation}
	in the sense of \cite[Section 6.1]{zabczyk}, where $\beta$ is as in \Cref{d:HJMM-equation}.
\end{remark}
Following the arguments in \cite{zabczyk2}, the study of invariant manifolds is interesting for the sake of model calibration. Indeed, for curve fitting for interest rate models, it is desirable for the stochastic models to stay within a prescribed manifold while still staying consistent with the NAFLVR condition. This is made precise in the following definitions.
\begin{definition}\label{def:invariant_gen}
    Let $\beta:\mathcal H\rightarrow\mathcal H$ and $\Sigma:\mathcal H\rightarrow L(\R ^d, \mathcal H)$ be measurable and consider the SPDE \eqref{eq:gen_sde}
    Let $\mathcal M\subset\mathcal H$. We say $\mathcal M$ is \emph{locally invariant} under \Cref{eq:gen_sde} if for any $\tilde f_0\in\mathcal M$ there exists a continuous weak solution $f$ with lifetime $\tau$ to \eqref{eq:gen_sde} such that $f_0=\tilde f_0$ and $f_{t\wedge\tau}\in\mathcal M$ a.s.\ for all $t\geq 0$. If $\tau =\infty$, we omit the term \emph{local} in the definition above.
\end{definition}
\begin{definition}\label{def:invariant}
   Let $\mathcal M\subset\mathcal H$ and define the \emph{parameter set} $\Lambda \subset\mathfrak M(\mathcal H,\mathcal H)\times \mathfrak M(\mathcal H,L(\R^d, \mathcal H))$. We say $\mathcal M$ is \emph{fully invariant} for $\Lambda$ if $\mathcal M$ is invariant under the SPDE
    \begin{equation}\label{eq:param_sde}
        df_t = (\partial _xf_t+\beta (f_t))dt+\Sigma (f_t)dW_t
    \end{equation}
    for any pair $(\beta ,\Sigma )\in\Lambda$.
\end{definition}
%
We are mainly interested in solutions to the HJMM-equation, hence we will distinguish that special case. Furthermore, to ensure existence of solutions and sufficient regularity, we will impose some additional technical constraints on the parameter set. These are laid out in the following definition:
\begin{definition}\label{def:hjmm_invariant}
     Let $\mathcal M\subset\mathcal H$ be a submanifold and consider the HJMM-equation in \Cref{d:HJMM-equation}. Define $\mathcal B\subset C^1(\mathcal H,L(\R^d, \mathcal H))\cap\emph{Lip}(\mathcal H, L(\R^d, \mathcal H))$ to be a subset of functions $\Sigma$ with $\supp(\Sigma)\cap\mathcal M$ compact.

Define now the parameter set for the HJMM-equation as
        \begin{equation*}
            \Lambda _{\text{HJMM}} = \set[\bigg]{ (\beta, \Sigma )\given \beta(h) = \frac12\Sigma (h)\int _0^{\cdot}(\delta _x\Sigma(h))^*dx, \Sigma\in\mathcal B }.
        \end{equation*} 
    

    
    We say that $\mathcal M$ is \emph{fully invariant under the HJMM-equation} if it is fully invariant for $\Lambda _{\text{HJMM}}$.
\end{definition}
\begin{remark}\label{rmk:invariant_manifold}
	Let $\mathcal M$ be a $C^2$-manifold, $\Sigma:\mathcal{H} \rightarrow L(\mathbb R^d,\mathcal{H})$ be of class $C^1$ and $\beta:\mathcal H\rightarrow\mathcal H$ locally Lipschitz and locally bounded in the sense of \cite[Definition 2.4.2, Definition 2.4.3]{filipovic_consistency}, respectively. \cite[Theorem 3]{filipovic_invariant}, \cite{teichmann1} have shown 
    that the following statements are equivalent:
	\begin{enumerate}
		\item $\mathcal M$ is locally invariant under \Cref{eq:gen_sde} with drift $\beta$ and diffusion $\Sigma$.
		\item The following statements hold:
			\begin{enumerate}
				\item $\mathcal M\subseteq \dom(\partial_x)$. \label{rmk:invariant_manifold21}
				\item $\Sigma(m) \in L(\mathbb R^d,T_m)$ for any $m\in\mathcal M$. \label{rmk:invariant_manifold22}
				\item $\partial_xm+\beta(m)-\frac{1}{2}\sum_{j}D(\Sigma ^j)(m)\Sigma ^j(m) \in T_m$ for any $m\in\mathcal M$, where $\Sigma ^j(m):=\Sigma (m)e_j$ for $e_j\in\R^d$ the $j$-th unit basis vector of $\R^d$. \label{rmk:invariant_manifold23}
			\end{enumerate}
	\end{enumerate}
   Moreover, \cite[Theorem 2]{filipovic_invariant} yields that if $\mathcal M$ is closed in $\mathcal H$, then invariance is ensured for $\tau =\infty$.
    \cite{tappe_invariant} extend this to a setting with jump diffusions with the additional requirement of $\mathcal M$ being of class $C^3$.
\end{remark}

\begin{remark}\label{rem:factor_model}
	A natural construction of a manifold $\mathcal M$ is as follows:
	Choose $g\in C^2(\mathbb R^d,\mathcal{H})$ injective with $\rank(Dg(y))=d$ for any $y\in\mathbb R^d$ and define the $C^2$-manifold
	\begin{equation*}
		\mathcal M:=\set{g(y)\given y\in\mathbb R^d}.
	\end{equation*}
	If $f$ is a path-continuous solution to the HJMM-equation with diffusion coefficient some $\Sigma$, then it can be written as $f_t=g(Y_t)$ for some path-continuous $\mathbb R^d$-valued diffusion process $Y$. 
	For practical applications, it appears more convenient to start out with a process $Y$ with $dY_t=b_tdt+a_tdW_t$ and then define define $f_t:=g(Y_t)$. Note that in this case, one obtains from It\^o's formula 
	\begin{equation*}
		df_t = (Dg(Y_t)b_t + \frac12 D^2g(Y_t)(a_t,a_t)) dt + Dg(Y_t)a_t dW_t
	\end{equation*}
	and we see immediately that $\Sigma_t := Dg(Y_t)\in L(\mathbb R^d,T_{f_t})$ for any $t\geq 0$. If we can rewrite $\Sigma_t=\Sigma(f_t)$ for some measurable $\Sigma$, then this reads as $\Sigma(m)\in T_m$ for any $m\in\mathcal M$ (at least those $m$ which $f$ can attain). That is, the tangential condition \ref{rmk:invariant_manifold22} naturally holds from this perspective. 
	The manifold $\mathcal M:=\{g(y):y\in\mathbb R^d\}$ is fully invariant under the HJMM-equation if and only if for any possible bounded choice of diffusion coefficient $a$ with compact support for $Y$ and any starting value $Y_0$ one can find a corresponding drift coefficient $b$ such that with $dY_t=b_tdt+a_tdW_t$ one has $g(Y)$ is a solution to the HJMM-equation.
\end{remark}

\section{Main results}\label{sect:main}
In this section, we will formally state our main result, which gives a description of finite-dimensional manifolds which are fully invariant under the HJMM-equation in the sense of \Cref{def:hjmm_invariant}.
\begin{definition}\label{def:linear_rational}
	Let $\mathcal M$ be a $d$-dimensional $C^2$ manifold. We call $\mathcal M$ \emph{linear-rational} if there is an open and connected set $U\subseteq\R ^d$, and $C^2$-functions $c:\R_+\rightarrow\R$, $u:\R_+\rightarrow\R^d$ such that
	\begin{equation*}
		\mathcal M = \set[\Bigg]{  \frac{c'+\sum_{j=1}^dz_ju'_j}{1-\left(c+\sum_{j=1}^dz_ju_j\right) } \given z\in U }
	\end{equation*}
	and
	where the denominators appearing in the equation for $\mathcal{M}$ are positive, i.e. we have
	\begin{equation*}
		c(x)+\sum _{i=1}^dz_ju_j(x)<1\qquad\text{for any }x\in\R_+, z\in U.
	\end{equation*}
\end{definition}
\begin{example}
    We now give an example of a linear-rational manifold $\mathcal M$ being a $C^2$-submanifold of a Hilbert space $\mathcal H$ satisfying Assumptions \ref{H1} and \ref{H2}. Let $\alpha >0$ and choose $\lambda _1<\dots <\lambda _d<-\alpha /2$. Set $U:=(-\infty ,0)^d$ and define the functions $u_j(x):=e^{\lambda _jx}$ for $j=1,\dots,d$ and $c\equiv 0$ and let $\mathcal M$ be given as in \Cref{def:linear_rational}. For any $f\in\mathcal M$, we have for $z_1,\dots,z_d\in U$
    \begin{equation*}
        f(x)=\frac{\sum _{j=1}^dz_j\lambda _je^{\lambda _jx}}{1-\sum _{j=1}^dz_je^{\lambda _jx}}.
    \end{equation*}
    and $f\in\mathcal H_w$ for $w(x):=e^{\alpha x}$, that is, $\mathcal M$ is a $C^2$-submanifold of $\mathcal H_w$, where $\mathcal H_w$ is the forward curve space defined in \Cref{def:forward_curve_space}.
\end{example}
\begin{theorem}\label{thm:main_result}
	Let $\mathcal M$ be a connected $d$-dimensional $C^2$-manifold such that $\mathcal M$ is fully invariant under the HJMM-equation. Then $\mathcal M$ is linear-rational in the sense of \Cref{def:linear_rational} 
\end{theorem}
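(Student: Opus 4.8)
The plan is to localise the problem, turn ``invariance for every tangential diffusion coefficient'' into a closed system of partial differential equations for the curves in $\mathcal M$, and recognise that system — after an exponential change of variables — as the defining relation of a linear-rational family. First I would extract the two cheap consequences of full invariance: feeding the admissible pair $(0,0)\in\Lambda_{\text{HJMM}}$ into \Cref{rmk:invariant_manifold} gives $\mathcal M\subseteq\dom(\partial_x)$ and $\partial_x m\in T_m$ for all $m\in\mathcal M$, so in condition \ref{rmk:invariant_manifold23} the term $\partial_x m$ is tangential and may be dropped. Next, fix $m_0\in\mathcal M$, a $C^2$-parametrisation $\chi\colon U\to\mathcal M$ with $\chi(z_0)=m_0$, and set $\psi(z,y):=\chi(z)(y)$ and $F(z,y):=\int_0^y\psi(z,x)\,dx$, so $F(z,0)=0$ and $\partial_yF=\psi$. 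A separate technical lemma shows that \emph{any} field of the form $\chi(z)\mapsto D\chi(z)a(z)$ with $a\in C^1_c(U,\R^d)$ arbitrary extends to an admissible $\Sigma\in\mathcal B$: put it in one column of $\Sigma$, zero in the others, and extend off $\mathcal M$ by a tubular-neighbourhood retraction composed with a cut-off, which keeps $\Sigma$ globally Lipschitz, $C^1$, and with $\supp\Sigma\cap\mathcal M$ compact. Thus the conditions of \Cref{rmk:invariant_manifold} must hold for all such $a$.

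Plugging $\Sigma^j(\chi(z))=D\chi(z)a^j(z)$ into condition \ref{rmk:invariant_manifold23}, evaluating the HJM drift $\beta(\chi(z))(y)=\sum_j\Sigma^j(\chi(z))(y)\int_0^y\Sigma^j(\chi(z))(x)\,dx$ of \Cref{d:HJMM-equation}, and computing $D\Sigma^j(\chi(z))\Sigma^j(\chi(z))$ by the chain rule (the terms carrying $\partial_za^j$ being automatically tangential), the requirement collapses to: for all $z$ and all $a\in\R^d$,
\[
\sum_{l,m}a_la_m\Bigl(\psi_{z_l}(z,\cdot)\,F_{z_m}(z,\cdot)-\tfrac12\,\psi_{z_lz_m}(z,\cdot)\Bigr)\in T_{\chi(z)}.
\]
Polarising in $a$, using $\psi_{z_l}F_{z_m}+\psi_{z_m}F_{z_l}=\partial_y(F_{z_l}F_{z_m})$ and $\psi_{z_lz_m}=\partial_yF_{z_lz_m}$ and $T_{\chi(z)}=\Span\{\partial_yF_{z_1},\dots,\partial_yF_{z_d}\}$, then integrating in $y$ from $0$ with $F(z,0)=0$, one obtains the nonlinear system
\[
F_{z_lz_m}(z,y)=F_{z_l}(z,y)\,F_{z_m}(z,y)-\sum_{k=1}^d\Gamma^k_{lm}(z)\,F_{z_k}(z,y),
\]
with coefficients $\Gamma^k_{lm}=\Gamma^k_{ml}$ depending on $z$ only; uniqueness of the $\Gamma^k_{lm}$ uses the linear independence of $\partial_{z_1}\chi(z),\dots,\partial_{z_d}\chi(z)$.

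The decisive step is the substitution $H:=e^{-F}$ — up to normalisation the zero-coupon bond price of the curve — which turns the quadratic system into the \emph{linear}, homogeneous, second-order overdetermined system $H_{z_lz_m}+\sum_k\Gamma^k_{lm}H_{z_k}=0$, whose coefficients still do not depend on $y$. Hence, for each fixed $y$, the map $z\mapsto(H(z,y),H_{z_1}(z,y),\dots,H_{z_d}(z,y))\in\R^{d+1}$ solves one and the same linear first-order system on $U$; transporting from $z_0$ (e.g.\ along straight lines in a convex chart) produces a transition matrix independent of $y$, so $H(z,y)=\sum_{i=0}^d\gamma_i(z)\,v_i(y)$ with $v_0:=H(z_0,\cdot)$, $v_k:=H_{z_k}(z_0,\cdot)$ and $\gamma_0(z_0)=1$, hence $\gamma_0\neq0$ near $z_0$. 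Since $\chi(z)(y)=F_y(z,y)=-H_y(z,y)/H(z,y)$, dividing numerator and denominator by $\gamma_0(z)$ and setting $c:=1-v_0$, $u_k:=-v_k$ puts $\chi(z)$ in exactly the shape of \Cref{def:linear_rational}; the denominator $v_0+\sum_k(\gamma_k/\gamma_0)v_k=\gamma_0^{-1}e^{-F}$ is positive, and $z\mapsto(\gamma_1/\gamma_0,\dots,\gamma_d/\gamma_0)(z)$ is a local diffeomorphism because $D\chi(z)$ has rank $d$ and factors through it. Finally, $\Span\{v_0,\dots,v_d\}$ is independent of the base point inside a chart (every $H(z,\cdot)$ and $H_{z_k}(z,\cdot)$ lies in it, and it is exactly $(d+1)$-dimensional, since otherwise $\chi$ could not have rank $d$), hence locally constant and — $\mathcal M$ being connected — globally constant; a basis of this fixed space together with the resulting open connected parameter set then yields the single parametrisation required by \Cref{def:linear_rational}, and the $C^2$-regularity of $c,u_1,\dots,u_d$ follows by bootstrapping the regularity of the forward curves from $\mathcal M\subseteq\dom(\partial_x)$.

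I expect the main obstacle to be the reduction from ``invariant for every admissible diffusion'' to the pointwise identity above: one must genuinely exhibit, for each prescribed tangential value at a single point, a diffusion coefficient that lies in $\mathcal B$ — i.e.\ is $C^1$, globally Lipschitz on all of $\mathcal H$, and has $\supp\Sigma\cap\mathcal M$ compact — so the tubular-neighbourhood/cut-off construction, performed with uniform control of the derivatives, is the real technical heart, and one must also justify the polarisation and the isolation of the $y$-independent coefficients $\Gamma^k_{lm}$. Once the PDE system is in hand, the passage via $H=e^{-F}$ to a finite-dimensional linear span, and from there to the linear-rational normal form, is comparatively short.
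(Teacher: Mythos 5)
Your proposal is correct in substance and reaches the theorem through the same underlying mechanism as the paper --- the exponential change of variables from forward curves to discount curves, combined with the fact that arbitrariness of the tangential diffusion forces an overdetermined linear system whose solution space is $(d+1)$-dimensional --- but the middle of your argument runs along a genuinely different track. The paper transforms the stochastic dynamics first: \Cref{lem:h_dynamics} computes the SPDE satisfied by $h_t=\Psi f_t$ with $\Psi f=1-\exp(-\J_\cdot f)$, then \Cref{lem:drift_cond} constructs an actual $\mathcal M$-valued solution for each constant matrix $\sigma$, applies It\^o's formula to the factor process $Y_t=\chi^{-1}(f_t)$, and matches drifts via conditional expectations; the polarisation is carried out by comparing $\sigma=\mathbbm 1_d$, $\mathbbm 1_d+E_{ij}$ and $2\mathbbm 1_d$, and finite-dimensionality of the image follows from a Gr\"onwall uniqueness argument (\Cref{p:tcdf}). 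You instead stay entirely deterministic: you feed rank-one tangential fields into the Nagumo-type conditions of \Cref{rmk:invariant_manifold}, polarise in the vector $a$, integrate in $y$ to obtain the explicit Riccati system $F_{z_lz_m}=F_{z_l}F_{z_m}-\sum_k\Gamma^k_{lm}F_{z_k}$, and only then linearise via $H=e^{-F}$. Your route makes the quadratic-to-linear mechanism explicit and dispenses with the existence and It\^o machinery of \Cref{lem:sol_existence} and \Cref{lem:h_dynamics}, at the price of leaning wholly on the equivalence in \Cref{rmk:invariant_manifold}, which the paper also invokes, so nothing is lost in rigour; you also correctly identify the shared technical heart, namely extending a prescribed tangential field to a $\Sigma\in\mathcal B$ that is globally $C^1$ and Lipschitz with $\supp\Sigma\cap\mathcal M$ compact (the paper's own ``arbitrary continuous continuation'' in \Cref{lem:drift_cond} is no more detailed than your tubular-neighbourhood sketch). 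One small caution: you wrote the HJM drift as $\beta(m)(y)=\sum_j\Sigma^j(m)(y)\int_0^y\Sigma^j(m)(x)\,dx$, omitting the factor $\tfrac12$ that appears in \Cref{d:HJMM-equation}; your normalisation is in fact the one needed for the cancellation $\J_\cdot\beta(f)=\tfrac12\J_\cdot(\Sigma(f)\Sigma(f)^*)$ used inside the proof of \Cref{lem:h_dynamics}, and the one for which $H=e^{-F}$ (rather than $e^{-F/2}$) linearises the system and yields exactly the normal form of \Cref{def:linear_rational}, so it is the right choice --- but the discrepancy with the stated definition should be flagged rather than silently adopted.
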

The proof of \Cref{thm:main_result} is provided at the end of \Cref{sect:proof}. Our main result implies that affine term structures are usually not fully invariant under the HJMM-equation. This is summarised in the next statement.
\begin{corollary}
	Let $\mathcal M\subseteq \mathcal{H}$ be a finite dimensional affine space and assume that it is fully invariant under the HJMM-equation. Then $\mathcal M$ contains a single point.
\end{corollary}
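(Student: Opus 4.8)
The plan is to invoke \Cref{thm:main_result} and then show that being simultaneously affine and linear-rational forces dimension zero. If $\dim\mathcal M=0$ there is nothing to prove, so suppose for contradiction that $\mathcal M$ has dimension $d\ge 1$, say $\mathcal M=\set{p+\sum_{j=1}^d y_jv_j\given y\in\R^d}$ with $p\in\mathcal H$ and $v_1,\dots,v_d\in\mathcal H$ linearly independent. A finite-dimensional affine subspace of $\mathcal H$ is a connected $d$-dimensional $C^2$-manifold (parametrise it affinely by $\R^d$), so \Cref{thm:main_result} applies and yields an open connected $U\subseteq\R^d$ together with $C^2$-functions $c:\R_+\to\R$, $u=(u_1,\dots,u_d):\R_+\to\R^d$ such that
\[
\mathcal M=\set[\Bigg]{\frac{c'+\sum_j z_ju_j'}{1-c-\sum_j z_ju_j}\given z\in U}
\]
with strictly positive denominators.

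Then I would pass to the ``denominator picture''. For $m\in\mathcal M$ pick any representative $z\in U$ and set $h_m:=1-c-\sum_j z_ju_j$; this is a strictly positive $C^2$-function lying in the \emph{fixed} finite-dimensional space $W:=\R(1-c)+\Span\set{u_1,\dots,u_d}$, of dimension at most $d+1$, and it satisfies $h_m'=-m\,h_m$, hence $h_m(x)=h_m(0)\exp\!\big(-\int_0^xm(r)\,dr\big)$. Since the linear-rational form makes every element of $\mathcal M$ a $C^1$-function, the primitives appearing here exist. Specialising to $m=p+sv_1\in\mathcal M$ and using $\int_0^x(p+sv_1)=M(x)+sV(x)$ with $M(x):=\int_0^xp$, $V(x):=\int_0^xv_1$, I get that the function $e^{-M-sV}=h_{p+sv_1}/h_{p+sv_1}(0)$ lies in $W$ for \emph{every} $s\in\R$.

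Finally I would derive a contradiction. Since $v_1\ne 0$, the continuous function $V$ is non-constant, so $V(\R_+)$ contains a non-degenerate interval. Choosing $d+2$ distinct parameters $s_0,\dots,s_{d+1}$, the $d+2$ functions $x\mapsto e^{-M(x)-s_iV(x)}$ all lie in the at most $(d+1)$-dimensional space $W$ and are therefore linearly dependent; cancelling the nowhere-zero factor $e^{-M(x)}$ leaves a nontrivial relation $\sum_i\lambda_ie^{-s_iV(x)}=0$ for all $x\ge 0$, hence $\sum_i\lambda_ie^{-s_iy}=0$ for all $y$ in a non-degenerate interval. This contradicts the linear independence of the exponentials $y\mapsto e^{-s_iy}$ ($i=0,\dots,d+1$) for pairwise distinct $s_i$. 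Therefore $d=0$ and $\mathcal M$ is a single point.

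The only delicate issues I anticipate are bookkeeping: the argument must use merely the \emph{existence} of a linear-rational representative $z$ for each point of $\mathcal M$ (not that it varies continuously), and it relies on the $C^1$-regularity of the curves in $\mathcal M$ — which is automatic from the linear-rational representation — to justify the ODE solution formula and the primitives $M$ and $V$. Once these are settled, what remains is the elementary fact that a finite-dimensional space of functions cannot contain $e^{-M-sV}$ for all real $s$ unless $V$ is constant.
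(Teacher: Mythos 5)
Your proof is correct, and it diverges from the paper's route only in the final step. Both arguments begin by applying \Cref{thm:main_result} and then pass to the ``denominator picture'': the functions $1-c-\sum_j z_ju_j = h_m(0)\exp(-\J_{\cdot}m)$ attached to the points $m\in\mathcal M$ all lie in the fixed, at most $(d+1)$-dimensional space $W=\R(1-c)+\Span\{u_1,\dots,u_d\}$. The paper packages the conclusion into \Cref{lem:transport_on_manifold}: for $h_0,h_1\in\mathcal M$ the line $t\mapsto h_0+t(h_1-h_0)$ stays in $\mathcal M$, so every integer power $g^n$ of $g:=\exp(-\J_{\cdot}(h_1-h_0))$ lies in a fixed finite-dimensional space, and \Cref{lem:group_action} (a multiplication-operator/eigenvalue argument) then forces $g$, hence $h_1-h_0$, to vanish. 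You instead take $d+2$ distinct parameters $s_i$ along the line $p+sv_1$, observe that the functions $e^{-M-s_iV}$ must be linearly dependent in $W$, cancel the nowhere-zero factor $e^{-M}$, and contradict the linear independence of $y\mapsto e^{-s_iy}$ on the non-degenerate interval $V(\R_+)$. Your ending is more elementary and self-contained: it avoids the algebraic eigenvalue lemma (the most delicate link in the paper's chain, which moreover needs the continuity of $g$ to pass from a finite range to constancy) and uses only finitely many points of the line rather than an unbounded set of parameters. The bookkeeping issues you flag --- existence (not continuity) of a representative $z$ for each $m$, strict positivity of the denominator so that $h_m/h_m(0)\in W$, and non-constancy of $V$ because the continuous function $v_1$ is not identically zero --- are exactly the right ones, and each checks out.
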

\begin{proof}
	Let $h_0, h_1\in\mathcal M$ and define the function $f:\R _+\rightarrow\mathcal H$, $t\mapsto (1-t)h_0+th_1$. Since $\mathcal M$ is an affine space, $f$ is $\mathcal M$-valued. It then follows from \Cref{lem:transport_on_manifold} that $\mathcal M$ is a singleton.
\end{proof}
\begin{remark}
    The difference between our setting and the setting in \cite{teichmann1, teich} is as follows. In \cite[Theorem 5.3]{teich} they answer the question: Given an admissible volatility structure, which HJMM-models permit a finite-dimensional realisation for essentially any initial curve $f_0$. Indeed, any invariant, finite-dimensional manifold constructed for such an initial curve $f_0$ turns out to be affine in the direction of the volatility. Our result answers a different question, namely: Which manifolds $\mathcal M$ are invariant for any tangential diffusion coefficient $\Sigma$ and any initial curve $f_0\in\mathcal M$? Here, the construction already assumes that for any $f_0\in\mathcal M$, there is a solution for the HJMM-equation with diffusion $\Sigma$ which is $\mathcal M$-valued, and our result reveals the structure of these manifolds. It turns out that these are of a linear-rational nature, cf. \Cref{thm:main_result}.
\end{remark}
The next theorem characterises the curves which appear in \Cref{thm:main_result} in the description of the manifold $\mathcal M$.
\begin{theorem}\label{thm:discount_invariant}
    Let $\mathcal H^0\subseteq \mathcal H$ be a subspace such that for any $h\in \mathcal H^0$ we have that the function $f_h(x) := h(x)\int_0^xh(u)du\in\mathcal H$ for $x\geq 0$.

	Let $\mathcal M\subseteq \mathcal H^0$ be a 
    connected $d$-dimensional $C^2$-submanifold which is closed in $\mathcal H$. Then the following statements are equivalent:
	\begin{enumerate}
		\item $\mathcal M$ is fully invariant under the HJMM-equation. 
		\item $\mathcal M$ is linear-rational in the sense of \Cref{def:linear_rational} and there is a matrix $M\in\R^{(d+1)\times (d+1)}$ such that the functions $c$ and $u$ fulfill $(c,u)(x)=(\mathbbm 1_{d+1}-e^{xM})e_1$. In particular, $c$ and $u$ are real analytic functions.
	\end{enumerate}
\end{theorem}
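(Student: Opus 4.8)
The plan is to combine the structural conclusion of \Cref{thm:main_result} with the precise consistency conditions of \Cref{rmk:invariant_manifold}, exploiting the extra hypothesis $\mathcal M\subseteq\mathcal H^0$. First I would note that the implication (ii)$\Rightarrow$(i) is the easier direction: given that $\mathcal M$ is linear-rational with $(c,u)(x)=(\mathbbm 1_{d+1}-e^{xM})e_1$, I would verify directly that for any tangential $\Sigma\in\mathcal B$ one can solve for a drift $\beta$ making \ref{rmk:invariant_manifold23} hold, using that $c,u$ are real-analytic and that the span of $\{1,c,u_1,\dots,u_d\}$ is invariant under $\partial_x$ (since $e^{xM}$ generates a finite-dimensional $\partial_x$-invariant space). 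Closedness in $\mathcal H$ then upgrades local invariance to global invariance via \cite[Theorem 2]{filipovic_invariant}, as recorded in \Cref{rmk:invariant_manifold}.

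For the harder direction (i)$\Rightarrow$(ii): by \Cref{thm:main_result} we already know $\mathcal M$ is linear-rational, so there are $C^2$-functions $c,u$ with $\mathcal M=\{(c'+\sum z_ju_j')/(1-(c+\sum z_ju_j)):z\in U\}$. The goal is to show $c,u$ must satisfy the ODE system encoded by a single matrix $M$, i.e.\ $(c,u)' = $ (something linear in $(c,u)$ and the basis). I would parametrise $\mathcal M$ by $\chi(z) = (c'+\<z,u'\>)/(1-c-\<z,u\>)$, compute the tangent space $T_{\chi(z)} = \Span\{\partial_{z_j}\chi(z):j=1,\dots,d\}$ explicitly, and then impose the drift consistency condition \ref{rmk:invariant_manifold23}. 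The key computation is $\partial_x\chi(z)$: differentiating the quotient in $x$ produces, besides $c'',u_j''$ terms, a product term of the form $\chi(z)\cdot(c'+\<z,u'\>)/(1-c-\<z,u\>)=\chi(z)^2$ coming from the chain rule on $1/(1-c-\<z,u\>)$. This is precisely where the hypothesis $h\mapsto h(\cdot)\int_0^\cdot h(u)\,du$ mapping $\mathcal H^0$ into $\mathcal H$ (equivalently, a Riccati-type quadratic closure) is used: it guarantees the quadratic term stays in $\mathcal H$ and, combined with the HJMM drift $\beta(h)=\tfrac12\Sigma(h)\int_0^\cdot(\delta_x\Sigma(h))^*dx$, lets me identify the quadratic contributions on both sides and strip them away, leaving a linear relation.

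Concretely, after choosing $\Sigma$ to vanish near a given point of $\mathcal M$ (possible since $\Sigma\in\mathcal B$ has $\supp(\Sigma)\cap\mathcal M$ compact, so locally $\beta=0$ too), condition \ref{rmk:invariant_manifold23} collapses to $\partial_x m\in T_m$ for $m$ in that neighbourhood. Writing this out with the explicit $\partial_x\chi(z)$ and $T_{\chi(z)}$, matching the numerator polynomials in $z$ (the denominators being equal), I expect to obtain that the vector $(c'',u_1'',\dots,u_d'')$ lies in the span of $(c',u_1',\dots,u_d')$ and $(1,c,u_1,\dots,u_d)$ in a $z$-independent way — that is, $(c,u)$ solves a linear constant-coefficient ODE. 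Normalising by the initial condition (one checks $(c,u)(0)=0$ from the requirement $\chi$ be well-defined / the leading behaviour), the solution is forced to be $(c,u)(x)=(\mathbbm 1_{d+1}-e^{xM})e_1$ for the appropriate companion-type matrix $M\in\R^{(d+1)\times(d+1)}$; real-analyticity is then immediate. I would also need to check that the open connected set $U$ and the positivity of denominators are preserved, and handle the global-versus-local issue by a connectedness/analytic-continuation argument, since the ODE was derived only locally but its solutions extend.

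The main obstacle I anticipate is the bookkeeping in separating the quadratic (Riccati) term from the linear terms in the expansion of $\partial_x\chi(z)$ and verifying that the quadratic term is \emph{exactly} absorbed by the HJMM drift $\beta$ rather than by the tangent space — i.e.\ showing the matching forces a genuinely linear ODE for $(c,u)$ with \emph{no} residual $z$-dependence in the coefficients. Getting the algebra of the numerators to line up (they are polynomials in $z$ of controlled degree) so that coefficient comparison yields a single matrix $M$, and not merely a $z$-dependent family of relations, is the crux; the hypothesis on $\mathcal H^0$ is what makes this clean.
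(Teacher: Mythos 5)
Your plan for (i)$\Rightarrow$(ii) is sound and in fact more self-contained than the paper's: after invoking \Cref{thm:main_result}, the paper simply cites \cite[Proposition 3.2]{filipovic_discount} to obtain the exponential form of $(c,u)$, whereas you propose to derive it from the transport condition $\partial_x m\in T_m$ (obtained by taking $\Sigma\equiv 0$, which is a legitimate element of the parameter set). The "crux" you worry about resolves more cleanly than you fear: writing $D_z:=1-c-\<z,u\>$ and noting $\chi(z)=-\partial_x\log D_z$, the condition $\partial_x\chi(z)\in\Span\{\partial_{z_j}\chi(z):j=1,\dots,d\}$ integrates in $x$ (using $c(0)=0$, $u(0)=0$, which follow from $(\Psi f)(0)=0$) to the requirement that $c'+\<z,u'\>-(c'(0)+\<z,u'(0)\>)D_z$ lie in $\Span\{u_1,\dots,u_d\}$ for all $z$; the coefficient of $z_iz_k$ on the left is $u_i'(0)u_k+u_k'(0)u_i$, which lies in that span automatically, and the constant and linear coefficients yield precisely a $z$-independent linear constant-coefficient ODE for $(1-c,u)$. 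So the Riccati term causes no residual $z$-dependence and your coefficient-matching would indeed produce the matrix $M$.

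The genuine gap is in (ii)$\Rightarrow$(i). You write that one can ``solve for a drift $\beta$ making \ref{rmk:invariant_manifold23} hold,'' but in $\Lambda_{\text{HJMM}}$ the drift is not a free parameter: it is forced to equal $\frac12\Sigma(h)\int_0^\cdot(\delta_x\Sigma(h))^*dx$, and the entire content of this direction is to verify that this \emph{prescribed} drift, together with the It\^o correction $-\frac12\sum_jD\Sigma^j(m)\Sigma^j(m)$, lands in $T_m$ for every tangential $\Sigma$. Shift-invariance of $\Span\{1,c,u_1,\dots,u_d\}$ only accounts for the $\partial_xm$ term. For the non-affine, linear-rational $\mathcal M$ this verification is not routine; the paper's device is to push the equation forward through $\Psi$ (\Cref{lem:h_dynamics}), where the HJM drift condition makes the quadratic terms cancel, the image $V=\Psi(\mathcal M)$ is \emph{affine}, condition \ref{rmk:invariant_manifold23} reduces via \cite[Theorem 4]{filipovic_invariant} to $\partial_xh+(h-1)h'(0)\in T_h$, and this is checked by an explicit computation with $e^{xM}$ (closedness of $\mathcal M$ then upgrades local to global invariance, as you note). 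You need this transformation or an equivalent mechanism that explains why the quadratic drift terms are absorbed; as written, your sketch of this direction does not close.
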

The proof of \Cref{thm:discount_invariant} is given at the end of \Cref{sect:proof}.

\begin{remark}
  The Hilbert space $\mathcal H_w$ given in \Cref{def:forward_curve_space} with $\mathcal H_w^0\subseteq \mathcal H_w$ given by \cite[Equation 5.12]{filipovic_consistency} satisfies that $f_h$ as given in the previous theorem is in $\mathcal H_w$ for any $h\in \mathcal H_w^0$ as shown in \cite[Lemma 5.2.1]{filipovic_consistency}.
\end{remark}

\section{Proof of the main statement}\label{sect:proof}
Throughout this section, let $\J$ denote the operator defined in \Cref{lem:j_functional}. We consider the transformation
\begin{equation}\label{eq:psi}
    \begin{aligned}
        \Psi:\ &\mathcal H \rightarrow C(\mathbb R_+,\mathbb R),\\
        &f\mapsto 1-\exp(-\J_{\cdot}f).
     \end{aligned}
\end{equation}

\begin{lemma}\label{lem:psi_properties}
    The map $\Psi$ from \Cref{eq:psi} has the following properties:
    \begin{enumerate}
        \item $\Psi f \in C^1(\mathbb R_+,\mathbb R) $ for any $f\in\mathcal H$.
        \item $(\delta_x\circ \Psi)\in C^\infty (\mathcal H,\mathbb R)$ for any $x\geq 0$.
        \item The map $\tilde \Psi:\mathcal H\times\mathbb R_+\rightarrow \mathbb R, (f,x)\mapsto (\Psi f)(x)$ satisfies $\tilde\Psi \in C^{\infty,1}(\mathcal H\times \mathbb R_+,\mathbb R)$.
	\item  $ (\Psi^{-1}h)(x) = \frac{h'(x)}{1-h(x)}$ for any $h\in\ran (\Psi )$ and any $x\geq 0$.
    \end{enumerate}
\end{lemma}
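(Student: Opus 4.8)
The map $\Psi$ is built from the operator $\J$ (which I take from \Cref{lem:j_functional}, presumably $\J_x f = \int_0^x f(u)\,du$ or a closely related integral functional valued in $\mathcal H$-compatible functions) composed with $t\mapsto 1-e^{-t}$, so the whole proof is an exercise in transferring regularity through these two operations. First I would record the basic regularity of $x\mapsto \J_x f$: by the defining properties of $\J$ from \Cref{lem:j_functional}, for fixed $f\in\mathcal H$ the function $x\mapsto \J_x f$ is continuously differentiable in $x$ with derivative essentially $\delta_x f$ (or $f(x)$), and it is linear and bounded in $f$; moreover joint continuity/measurability in $(f,x)$ should follow from the same lemma together with Assumptions \ref{H1}, \ref{H2} (point evaluations $\delta_x$ are continuous linear). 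Item \textit{i)} is then immediate: $\Psi f = 1-\exp(-\J_\cdot f)$ is a composition of the $C^1$ scalar function $\exp(-\,\cdot\,)$ with the $C^1$ (in $x$) map $x\mapsto \J_x f$, so $\Psi f\in C^1(\R_+,\R)$, with $(\Psi f)'(x) = e^{-\J_x f}\,\frac{d}{dx}\J_x f$.

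For \textit{ii)}, fix $x\geq 0$ and consider $\delta_x\circ\Psi: f\mapsto 1-\exp(-\J_x f)$. Here $f\mapsto \J_x f$ is a \emph{continuous linear} map $\mathcal H\to\R$ (linearity from \Cref{lem:j_functional}, boundedness because it factors through the continuous point evaluations by (H1)–(H2)), hence smooth with derivative equal to itself and all higher derivatives zero. Composing with the real-analytic function $t\mapsto 1-e^{-t}$ and invoking the chain rule for Fréchet derivatives gives $\delta_x\circ\Psi\in C^\infty(\mathcal H,\R)$; I would note the explicit formula $D(\delta_x\circ\Psi)(f)h = e^{-\J_x f}\,\J_x h$ to make the induction on higher derivatives transparent. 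Item \textit{iii)} combines \textit{i)} and \textit{ii)}: smoothness in $f$ holds uniformly enough (the constants in the derivative bounds depend continuously on $x$ on compacts) that $\tilde\Psi(f,x)=(\Psi f)(x)$ is $C^\infty$ in $f$ and $C^1$ in $x$ jointly; the mixed regularity $C^{\infty,1}$ is checked by writing out the partial derivatives $\partial_f^k\partial_x^\ell\tilde\Psi$ for $\ell\le 1$ and verifying each is continuous in $(f,x)$ using joint continuity of $(f,x)\mapsto \J_x f$ and of $(f,x)\mapsto \frac{d}{dx}\J_x f$ from \Cref{lem:j_functional}.

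For \textit{iv)}, I would argue by direct inversion: if $h=\Psi f$, then $h(x) = 1-e^{-\J_x f}$, so $e^{-\J_x f} = 1-h(x)$ (which is positive on $\ran\Psi$, making the logarithm legitimate), hence $\J_x f = -\log(1-h(x))$. Differentiating in $x$ and using $\frac{d}{dx}\J_x f = f(x)$ from \Cref{lem:j_functional} gives $f(x) = \frac{h'(x)}{1-h(x)}$, which is exactly the claimed formula for $\Psi^{-1}$; one should remark that this also shows $\Psi$ is injective on the relevant domain, so $\Psi^{-1}$ is well defined there. The main obstacle I anticipate is purely bookkeeping rather than conceptual: one must be careful that $x\mapsto \J_x f$ really is $C^1$ with the derivative identity on all of $\R_+$ (not just a.e.) and that the joint-regularity statement in \textit{iii)} has the constants controlled locally uniformly in $x$ — both of which should be exactly what \Cref{lem:j_functional} is designed to provide, so the proof reduces to quoting that lemma and applying the chain rule carefully.
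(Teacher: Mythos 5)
Your proposal is correct and follows essentially the same route as the paper's proof: item \textit{i)} by the chain rule giving $(\Psi f)'(x)=f(x)\exp(-\J_xf)$, item \textit{ii)} by composing the continuous linear functional $\J_x$ with the smooth exponential, item \textit{iii)} as a direct consequence, and item \textit{iv)} by inverting $h(x)=1-\exp(-\J_xf)$ and differentiating. The paper's version is terser (it cites \Cref{lem:j_functional} for the regularity of $\J$ and leaves the joint-continuity bookkeeping in \textit{iii)} implicit), but the substance is identical.
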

\begin{proof} Let $\Psi$ be given as in \Cref{eq:psi}.
	\begin{enumerate}
		\item Let $h\in\mathcal H$ be arbitrary. By the properties of the exponential map and the integral, we have
			\begin{equation*}
				(\Psi f)'(x)=f(x)\exp (-\J _xf),
			\end{equation*}
			which is a continuous map.
		\item The integral $\J_x$ is a linear operator on $\mathcal H$ for any $x\geq 0$, and the exponential map is smooth on $\mathcal H$, whence the statement follows.
		\item This follows directly.
		\item Let $h\in \ran(\Psi)$. Then there is $f\in \mathcal H$ with
			\begin{equation*}
				h(x) = 1-\exp(-\J_xf),\quad x\geq 0,
			\end{equation*}
			and hence
			\begin{equation*}
				\frac{h'(x)}{1-h(x)} = f(x),\quad x\geq 0
			\end{equation*}
			as required.
	\end{enumerate}
\end{proof}

If $f_t$ is the forward rate curve seen at a single point of time $t$, then the corresponding zero-coupon bond prices are given by $P(t,t+x) = \exp(-\int_0^xf_t(s)ds)$. The bond prices are also given via the discount curve $h$ by $P(t,t+x) = 1-h_t(x)$ and we see that $h_t(x) = 1-P(t,t+x) = 1-\exp(-\int_0^xf_t(s)ds)=(\Psi f_t)(x)$. Furthemore, the short rate process $r_t:=\lim _{T\rightarrow T}-\partial _T\log (P(t,T))$ fulfills $r_t=\delta _0(f_t)$. The next lemma characterises the transformation of the dynamics of $f$ into the dynamics of the transformed process $\Psi f$ in dependence of its state.
\begin{lemma}\label{lem:h_dynamics}
	Let $\Sigma: \mathcal{H}\rightarrow L(\mathcal H,\mathbb R ^d)$ be locally bounded and measurable and let $f$ be an $\mathcal{H}$-valued progressively measurable process satisfying
	\begin{equation*}
		f_t=\Sg_tf_0+\int _0^t\Sg_{t-s}\beta (f_s)ds+\int _0^t\Sg_{t-s}\Sigma (f_s)dW_s\quad\text{for any }t\geq 0,
	\end{equation*}
	where $\beta(h) :=\frac12 \Sigma (h)\J _{\cdot}\Sigma(h)^*$. Define for any $t\geq 0$ the process
	\begin{equation*}
		h_t:=\Psi f_t.
	\end{equation*}
	Then we have
	\begin{equation}\label{eq:h_dynamics1}
		h_t=S_th_0+\int _0^t\Sg_{t-s}\beta ^h(h_s)ds+\int _0^t\Sg_{t-s}\Sigma ^h(h_s)dW_s,
	\end{equation}
	where
	\begin{equation}\label{eq:h_dynamics2}
		\begin{aligned}
			\beta ^h(h)&:=(h-1)\delta _0(\Psi ^{-1}h),\\
			\Sigma ^h(h)&:=(1-h)\J_{\cdot}\Sigma (\Psi ^{-1}h)
		\end{aligned}
	\end{equation}
	Moreover, one has
	\begin{equation}\label{eq:h_dynamics3}
		\Sigma (f)(x)=\partial _x\left(\exp (\J_xf)\Sigma ^h(\Psi f)\right)
	\end{equation}
\end{lemma}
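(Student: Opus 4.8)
The plan is to prove the three displayed identities, \eqref{eq:h_dynamics1} being the substantive one. I would establish \eqref{eq:h_dynamics1} pointwise: it is an identity between $\mathcal H$-valued (hence, since every $\delta_x$ is continuous, between $C(\R_+,\R)$-valued) terms, so it suffices to evaluate at an arbitrary fixed $x\ge 0$, with $t\ge 0$ also fixed. Spelling out the right-hand side of \eqref{eq:h_dynamics1} at $x$ via \eqref{eq:h_dynamics2}, the relations $\Psi^{-1}h_s=f_s$ and $\delta_0(\Psi^{-1}h_s)=f_s(0)$ from \Cref{lem:psi_properties}, and $1-h_s=\exp(-\J_\cdot f_s)$, one finds
\[
	(\Sg_t h_0)(x)=1-e^{-\J_{x+t}f_0},\qquad (\Sg_{t-s}\beta^h(h_s))(x)=-e^{-\J_{x+t-s}f_s}f_s(0),
\]
\[
	(\Sg_{t-s}\Sigma^h(h_s))(x)=e^{-\J_{x+t-s}f_s}\,\J_{x+t-s}\Sigma(f_s) .
\]
Hence, with $\phi_s:=\exp(-\J_{x+t-s}f_s)$ for $s\in[0,t]$ (so $\phi_0=e^{-\J_{x+t}f_0}$ and $\phi_t=e^{-\J_x f_t}=1-h_t(x)$), the assertion \eqref{eq:h_dynamics1} evaluated at $x$ is equivalent to
\[
	\phi_t-\phi_0=\int_0^t\phi_s f_s(0)\,ds-\int_0^t\phi_s\,\J_{x+t-s}\Sigma(f_s)\,dW_s .
\]

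The crux is to show that $s\mapsto\J_{x+t-s}f_s$ is an honest $\R$-valued It\^o process on $[0,t]$ — this is where one must work, since $f$ is only a mild solution and an It\^o formula cannot be applied to $h=\Psi f$ directly in $\mathcal H$. Applying the bounded linear functional $\J_{x+t-s}$ (available by \Cref{lem:j_functional}) to the mild representation of $f_s$, commuting it through the (stochastic) integrals, and using the shift-semigroup identity $\J_a\Sg_b=\J_{a+b}-\J_b$, one gets
\[
	\J_{x+t-s}f_s=\J_{x+t}f_0+\int_0^s\J_{x+t-r}\beta(f_r)\,dr+\int_0^s\J_{x+t-r}\Sigma(f_r)\,dW_r-R_s ,
\]
where $R_s:=\J_s f_0+\int_0^s\J_{s-r}\beta(f_r)\,dr+\int_0^s\J_{s-r}\Sigma(f_r)\,dW_r$ gathers exactly the contributions whose integrand still depends on the upper limit $s$. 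Applying $\delta_0$ to the mild equation for $f$ and using Fubini and the stochastic Fubini theorem identifies $R_s=\int_0^s f_r(0)\,dr$, so the right-hand side above is a genuine It\^o process in $s$ with drift $-f_s(0)+\J_{x+t-s}\beta(f_s)$ and $\R^d$-diffusion $\J_{x+t-s}\Sigma(f_s)$. The one-dimensional It\^o formula applied to $\phi_s=\exp(-\J_{x+t-s}f_s)$ then gives a $ds$-drift equal to $\phi_s\big(f_s(0)-\J_{x+t-s}\beta(f_s)+\tfrac12|\J_{x+t-s}\Sigma(f_s)|^2\big)$, which collapses to $\phi_s f_s(0)$ because the HJM drift condition $\beta(h)=\tfrac12\Sigma(h)\J_\cdot\Sigma(h)^*$ amounts to $\J_a\beta(h)=\tfrac12|\J_a\Sigma(h)|^2$ for every $a\ge 0$ (integrate componentwise and use $\int_0^a g(y)\int_0^y g(v)\,dv\,dy=\tfrac12(\int_0^a g)^2$). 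Integrating over $[0,t]$ gives the displayed $\phi$-identity and hence \eqref{eq:h_dynamics1}.

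Identity \eqref{eq:h_dynamics3} is then immediate from the definitions: by \Cref{lem:psi_properties} and \eqref{eq:h_dynamics2}, $\Sigma^h(\Psi f)=(1-\Psi f)\,\J_\cdot\Sigma(\Psi^{-1}\Psi f)=e^{-\J_\cdot f}\,\J_\cdot\Sigma(f)$, so $\exp(\J_\cdot f)\,\Sigma^h(\Psi f)=\J_\cdot\Sigma(f)$, whose right-derivative in $x$ is $\Sigma(f)(x)$, since $x\mapsto\J_x\Sigma(f)=\int_0^x\Sigma(f)(u)\,du$ is continuously differentiable with derivative $\Sigma(f)$ (point evaluation being continuous on $\mathcal H$).

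The main obstacle is precisely the reduction of the mild object $h=\Psi f$ to the scalar It\^o process $\phi$: choosing the moving functionals $\J_{x+t-s}$, exploiting $\J_a\Sg_b=\J_{a+b}-\J_b$ to remove the spurious dependence on the upper limit, and recognising the residue $R_s$ as $\int_0^s f_r(0)\,dr$. The attendant technicalities — that $\J_{x+t-s}$ commutes with the stochastic integral, the (stochastic) Fubini interchanges, and integrability of $\phi$ and of $s\mapsto\J_{x+t-s}\Sigma(f_s)$ on $[0,t]$ — follow routinely from local boundedness of $\Sigma$ (hence of $\beta$, with $\beta(f_s)\in\mathcal H$) and progressive measurability of $f$.
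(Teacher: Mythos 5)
Your proposal is correct and follows essentially the same route as the paper: with $T:=x+t$ your $\phi_s=\exp(-\J_{x+t-s}f_s)$ is exactly the paper's $1-\tilde h_s(T-s)$, your residue $R_s=\int_0^s f_r(0)\,dr$ is the paper's short-rate identity obtained via (stochastic) Fubini, and the drift cancellation via $\J_a\beta(h)=\tfrac12|\J_a\Sigma(h)|^2$ is the same. The only difference is cosmetic (you fix $x$ and argue pointwise, the paper parametrises by maturity $T$), and your treatment of \eqref{eq:h_dynamics3} matches the paper's "invert the identity for $\Sigma^h$" step.
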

\begin{proof}
	Let $r_t=\delta _0(f_t)$ for any $t\geq 0$ and define for $0\leq t\leq T$
	\begin{equation*}
		\tilde{h}_t(T-t):=(\Psi f_t)(T-t)=1-\exp\left( -\J_{T-t}f_t\right) .
	\end{equation*}
	We have
	\begin{equation}\label{eq:h_dynamics_proof1}
		\begin{aligned}
			\J_{T-t}(f_t) &= (\J_T-\J_t)f_0 + \int_0^t (\J_{T-s}-\J_{t-s})\beta(f_s)+\int_0^t (\J_{T-s}-\J_{t-s})\Sigma(f_s)dW_s \\
                    	&= \J_Tf_0 + \int_0^t (\J_{T-s})\beta(f_s)+\int_0^t (\J_{T-s})\Sigma(f_s)dW_s\\
			&-\left(\J_tf_0+\int_0^t\J_{t-s}\beta(f_s)ds + \int_0^t\J_{t-s}\Sigma(f_s)dW_s\right).
		\end{aligned}
	\end{equation}
	Furthermore, we find that the short rate satisfies
	\begin{equation*}
		\begin{aligned}
			\int_0^t r_xdx  &=\int_0^t\left(\delta _0\Sg_xf_0 +\int_0^x\delta _0\Sg_{x-s}\beta(f_s)ds+\int_0^x\delta_0\Sg_{x-s}\Sigma(f_s)dW_s\right)dx \\
                                &= \int _0^t\delta _xf_0dx+\int _0^t\int _0^x\delta _{x-s}\beta (f_s)dsdx+\int _0^t\int _0^x\delta _{x-s}\Sigma (f_s)dW_sdx
		\end{aligned}
	\end{equation*}
	Using Fubini's theorem to exchange the order of integration we find that
	\begin{equation}\label{eq:h_dynamics_proof2}
		\begin{aligned}
                	\int _0^tr_xdx &= \int_0^tf_0(x)dx + \int_0^t\int_0^{t-s}\delta_{x}\beta(f_s)dxds + \int_0^t\int_0^{t-s}\delta_{x}\Sigma(f_s)dxdW_s \\
                	&= \J_tf_0 + \int_0^t \J_{t-s}\beta(f_s)ds+\int_0^t\J_{t-s}\Sigma(f_s)dW_s
		\end{aligned}
	\end{equation}
	Inserting identity \eqref{eq:h_dynamics_proof2} into \Cref{eq:h_dynamics_proof1} we thus have
	\begin{equation*}
		\J_{T-t}(f_t)=\J_Tf_0 + \int_0^t \J_{T-s}\beta(f_s)+\int_0^t \J_{T-s}\Sigma(f_s)dW_s-\int_0^tr_sds.
	\end{equation*}
	Consequently, we have by It\^o's formula
	\begin{equation*}
		\begin{aligned}
			\tilde{h}_t(T-t)=&\tilde{h}_0(T-t)\\
            &+\int _0^t\exp (-\J_{T-s}f_s)\left(\J _{T-s}\beta (f_s)-r_s-\frac12\J _{T-s}(\Sigma (f_s)\Sigma (f_s)^*\right)ds\\
            &+\int _0^t\exp (-\J _{T-s}f_s)\J _{T-s}\Sigma (f_s)dW_s.
		\end{aligned}
	\end{equation*}
	After substituting the expression for the HJM-drift for $\beta$, we observe after integrating that $\J _{\cdot}\beta (f)=\frac12 \J _{\cdot}\left(\Sigma (f)\Sigma(f)^*\right)$. Thus, we find
	\begin{equation*}
		\tilde{h}_t(T-t)=\tilde{h}_0(T-t)-\int _0^t\exp (-\J_{T-s}f_s)r_sds+\int _0^t\exp (\J _{T-s}f_s)\J _{T-s}\Sigma (f_s)dW_s.
	\end{equation*}
	Defining the drift- and diffusion coefficient as in \Cref{eq:h_dynamics2}, we have $\tilde{h}_t(x)=\delta_x h_t$ for any $t\geq 0$ and $x\geq 0$, where $h$ satisfies \Cref{eq:h_dynamics1} and the claim follows. Finally, \Cref{eq:h_dynamics3} follows by inverting the identity for $\Sigma ^h$ in \Cref{eq:h_dynamics2}.
\end{proof}

To proceed with our proof, we need an additional result for existence of solutions to the HJMM-equation in the special case of the regularity assumptions we have outlined in \Cref{def:hjmm_invariant}. As it turns out, this can be deduced from standard results.

\begin{lemma}\label{lem:sol_existence}
	Let $\Sigma :\mathcal H\rightarrow L(\R ^d,\mathcal H)$ be Lipschitz, such that $\emph{supp}(\Sigma)\subseteq B$ for some bounded set $B\subset\mathcal H$. Define $\beta (h):=\frac12\Sigma (h)\int _0^{\cdot}(\delta _x\Sigma (h))^*dx$. Then there is an $\mathcal H$-valued stochastic process $(f_t)_{t\geq 0}$ such that
	\begin{equation*}
		f_t=\Sg_tf_0+\int _0^t\Sg_{t-s}\beta (f_s)ds+\int _0^t\Sg_{t-s}\Sigma (f_s)dW_s,
	\end{equation*}
    for any $t\geq 0$.
\end{lemma}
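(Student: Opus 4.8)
The plan is to read the integral equation in the statement as the mild-solution formulation of the semilinear SPDE \eqref{eq:gen_sde}, driven by the finite-dimensional Brownian motion $W$, with linear part generated by $\partial_x$, and then to invoke a standard existence (and uniqueness) theorem for such equations — for instance the Da Prato--Zabczyk framework as in \cite[Section 6.1]{zabczyk}, or the HJMM-specific statements in \cite{filipovic_consistency, tappe_existence}. By assumption \ref{H2}, $(\Sg_t)_{t\ge 0}$ is a $c_0$-semigroup, and because the noise is finite-dimensional, $L(\R^d,\mathcal H)$ coincides with the space of Hilbert--Schmidt operators (with equivalent norm), so no Hilbert--Schmidt subtlety arises. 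The only hypotheses left to check are therefore measurability and a global Lipschitz bound (which entails linear growth) for the two coefficients $\Sigma$ and $\beta$; since $\Sigma$ is globally Lipschitz by assumption, the whole argument reduces to the corresponding statement for $\beta$.

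First I would record that $\Sigma$ is bounded. Since $\supp(\Sigma)\subseteq B$ with $B$ bounded, we have $R:=\sup_{g\in B}\lVert g\rVert<\infty$; for $h$ with $\lVert h\rVert>R$ one has $h\notin\supp(\Sigma)$, hence $\Sigma(h)=0$, while for $h\in\supp(\Sigma)$, picking any $h'$ with $\lVert h'\rVert=R+1$ gives $\Sigma(h')=0$, so that $\lVert\Sigma(h)\rVert=\lVert\Sigma(h)-\Sigma(h')\rVert\le L\,\lVert h-h'\rVert\le L(2R+1)$, where $L$ is a Lipschitz constant of $\Sigma$. Hence $K:=\sup_{h\in\mathcal H}\lVert\Sigma(h)\rVert_{L(\R^d,\mathcal H)}<\infty$.

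Next I would bound $\beta$. Writing $\beta(h)=\tfrac12\,\Sigma(h)\,\J_\cdot\Sigma(h)^*$ and using the mapping properties of the operator $\J$ from \Cref{lem:j_functional} together with the fact that the relevant pointwise products lie in $\mathcal H$, the elementary ``difference of products'' estimate yields
\begin{equation*}
\begin{aligned}
\lVert\beta(h_1)-\beta(h_2)\rVert
&\le \tfrac12\Big(\lVert\Sigma(h_1)\rVert\,\big\lVert\J_\cdot(\Sigma(h_1)-\Sigma(h_2))^*\big\rVert+\lVert\Sigma(h_1)-\Sigma(h_2)\rVert\,\big\lVert\J_\cdot\Sigma(h_2)^*\big\rVert\Big)\\
&\le C\,K\,L\,\lVert h_1-h_2\rVert
\end{aligned}
\end{equation*}
for a constant $C$ depending only on $\J$ and $\mathcal H$. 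In particular $\beta(h)\in\mathcal H$ for every $h$, $\beta$ is (Lipschitz) continuous hence measurable, and it is locally bounded. With $\beta$ and $\Sigma$ both globally Lipschitz, the cited existence theorem applies and produces an $\mathcal H$-valued (mild) solution $(f_t)_{t\ge 0}$ of the displayed equation, defined for all $t\ge 0$ and for any prescribed starting value $f_0$.

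The main obstacle is precisely the Lipschitz estimate for $\beta$: the HJM drift is a pointwise (bilinear) product of the volatility with its own primitive, and pointwise products of elements of $\mathcal H$ need not belong to $\mathcal H$, so one genuinely has to exploit the mapping properties of $\J$ recorded in \Cref{lem:j_functional} together with the uniform bound on $\Sigma$ coming from the bounded-support assumption. Once these are in place, the remainder is a routine appeal to standard SPDE existence theory.
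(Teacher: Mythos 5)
Your overall strategy coincides with the paper's: verify that $\beta$ is a well-defined Lipschitz map from $\mathcal H$ to $\mathcal H$ (the boundedness of $\Sigma$ you derive from the bounded support is fine and is used implicitly in the paper as well) and then invoke the Da Prato--Zabczyk existence theorem for mild solutions. However, the step you yourself single out as ``the main obstacle'' is not actually carried out, and as written it does not go through. Your displayed estimate
\begin{equation*}
	\lVert\beta(h_1)-\beta(h_2)\rVert \le \tfrac12\left(\lVert\Sigma(h_1)\rVert\,\big\lVert\J_\cdot(\Sigma(h_1)-\Sigma(h_2))^*\big\rVert+\lVert\Sigma(h_1)-\Sigma(h_2)\rVert\,\big\lVert\J_\cdot\Sigma(h_2)^*\big\rVert\right)
\end{equation*}
presupposes that the $\mathcal H$-norm of a pointwise product is controlled by the product of the $\mathcal H$-norms of the factors. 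Under \ref{H1}--\ref{H2} alone, $\mathcal H$ is not a Banach algebra for pointwise multiplication; indeed, the paper only assumes that products of the form $h(x)\int_0^xh(u)du$ land in $\mathcal H$ on a distinguished subspace $\mathcal H^0$ in \Cref{thm:discount_invariant}, precisely because this fails in general. Moreover, \Cref{lem:j_functional} only provides $\J_x$ as a continuous linear \emph{functional} together with joint continuity of $(x,f)\mapsto\J_xf$; it does not assert that the primitive $x\mapsto\J_xg$ is itself an element of $\mathcal H$, so the quantity $\lVert\J_\cdot(\Sigma(h_1)-\Sigma(h_2))^*\rVert$ appearing in your estimate is not even a priori meaningful as an $\mathcal H$-norm. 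Consequently both the claim $\beta(h)\in\mathcal H$ and the Lipschitz property of $\beta$ remain unproven in your write-up.

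The paper circumvents the product issue by a different decomposition: it rewrites $(\delta_x\Sigma(h))^*=\Sigma(h)^*\delta_x^*$, so that $\beta(h)=\frac12\Sigma(h)\Sigma(h)^*\int_0^\cdot\delta_x^*dx$, and the question reduces to making sense of $\int_0^\cdot\delta_x^*dx$. Using \ref{H1}, $\mathcal H$ is a reproducing-kernel Hilbert space, $\delta_x^*$ is identified with $y\mapsto yk_x$ for the kernel $k_x=\Sg_x^*k_0$, and the substantive analytic step is to show that $x\mapsto k_x$ is continuous: the adjoint semigroup $(\Sg_x^*)_{x\geq0}$ is weakly continuous by \ref{H2}, hence strongly continuous on the reflexive space $\mathcal H$, so $x\mapsto k_x$ is continuous and locally Bochner-integrable, which makes $\beta$ well defined. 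This is the content your proposal is missing; the remaining Lipschitz assertion for $\beta$ and the appeal to \cite[Theorem 7.2]{zabczyk} then match what you wrote.
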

\begin{proof}
First, observe that for any $h\in\mathcal H$
\begin{equation*}
    \begin{aligned}
        \beta (h)&=\frac12 \Sigma(h)\int _0^{\cdot}(\delta _x\Sigma (h))^*dx = \frac12 \Sigma(h)\int _0^{\cdot}\Sigma (h)^*\delta _x^*dx\\
        &=\frac12\Sigma (h)\Sigma (h)^*\int _0^{\cdot}\delta _x^*dx.
    \end{aligned}. 
\end{equation*}
Since $\mathcal H$ fulfills Assumption \ref{H1}, it is an RKHS and therefore there exists for all $x\in\R _+$ a function $k_x\in\mathcal H$ such that for $h\in\mathcal H$ we have $\delta _xh = \< k_x, h\> _{\mathcal H}$. We may now identify $\delta _x^*$ by using the reproducing property. We have for $h\in\mathcal H$ and $y\in\R$
\begin{equation*}
    \< h,\delta _x^*y\>_{\mathcal H}=\<\delta _xh,y\> = y\< h,k_x\>_{\mathcal H} =\<h,yk_x\>_{\mathcal H}.
\end{equation*}
Thus, $\delta _x^*$ can be identified with the function $y\mapsto yk_x$. We have $k_x=1\cdot k_x\delta _x^*(1)=\Sg^*_x\delta _0^*(1)=\Sg ^*_x(1\cdot k_0) = \Sg ^*_xk_0$. Consider for $h,h'\in\mathcal H$ the map $\phi _{h,h'}:\R_+\rightarrow\R$
\begin{equation}
    x\mapsto \< h,\Sg^*_xh'\>_{\mathcal H}=\< \Sg _xh,h'\>_{\mathcal H}. 
\end{equation}
Since $\mathcal H$ fulfills Assumption \ref{H2}, $\Sg$ is strongly continuous. This implies that the maps $\phi _{h,h'}$ are continuous for all $h,h'i\in\mathcal H$. This implies that the adjoint semigroup $(\Sg _x^*)_{x\geq 0}$ is weakly continuous. By \cite[Theorem 5.8]{engel}, $\Sg^*$ is strongly continuous (we note that more generally this holds on any reflexive Banach space) and accordingly, the map $x\mapsto k_x$ is continuous and therefore (locally) integrable. Thus, $\beta $ is well-defined.
Since $\Sigma $ is Lipschitz and with bounded support, $\beta$ is Lipschitz. \cite[Theorem 7.2]{zabczyk} yields existence of the process $(f_t)_{t\geq 0}$.
\end{proof}

Since $\mathcal M$ is fully invariant under the HJMM-equation we find that it is invariant under any $\Sigma$ with bounded support fulfilling the regularity conditions outlined in \Cref{def:hjmm_invariant}. In particular, we have $\mathcal M\subseteq \dom(\partial_x)$. This is an adaptation of the classic result
\begin{lemma}\label{l:M in dx}
	Let $\mathcal M$ be fully invariant under the HJMM-equation. Then we have
	\begin{equation*}
		\mathcal M\subseteq \dom(\partial_x).
	\end{equation*}
\end{lemma}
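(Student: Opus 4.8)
The strategy is to extract the containment from the infinitesimal characterisation of invariant manifolds recorded in \Cref{rmk:invariant_manifold}, applied to a trivial choice of coefficients. By \Cref{def:hjmm_invariant} together with \Cref{def:invariant} and \Cref{def:invariant_gen}, saying that $\mathcal M$ is fully invariant under the HJMM-equation means that $\mathcal M$ is invariant under \eqref{eq:param_sde} for \emph{every} pair $(\beta,\Sigma)\in\Lambda_{\text{HJMM}}$. The first step is to pick a convenient such pair, namely $\Sigma\equiv 0$: its support is empty, so $\supp(\Sigma)\cap\mathcal M$ is (vacuously) compact, and $\Sigma\in C^1(\mathcal H,L(\R^d,\mathcal H))$ is Lipschitz, so $\Sigma\in\mathcal B$; the associated drift is $\beta\equiv 0$, hence $(\beta,\Sigma)=(0,0)\in\Lambda_{\text{HJMM}}$. (Any $\Sigma\in\mathcal B$ with bounded support would serve just as well, with existence supplied by \Cref{lem:sol_existence}, but the zero coefficient minimises the bookkeeping.)

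For this pair, \eqref{eq:gen_sde} reduces to the deterministic transport equation $df_t=\partial_xf_t\,dt$, whose unique mild solution started at $m\in\mathcal M$ is the semigroup orbit $f_t=\Sg_tm$. Hence full — in particular local — invariance of $\mathcal M$ amounts to $\Sg_tm\in\mathcal M$ for all $m\in\mathcal M$ and $t\ge 0$; that is, $\mathcal M$ is locally invariant under \eqref{eq:gen_sde} with $\beta=0$ and $\Sigma=0$. Since $\mathcal M$ is a $C^2$-manifold and the zero coefficients are trivially of class $C^1$ and locally Lipschitz and locally bounded, the hypotheses of \Cref{rmk:invariant_manifold} are met, and its equivalence then yields the three infinitesimal conditions; the first of these is precisely $\mathcal M\subseteq\dom(\partial_x)$, which is the assertion.

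I do not expect a real obstacle: the analytic substance is already packaged in \Cref{rmk:invariant_manifold}, and all that remains is the bookkeeping above — that an admissible coefficient pair exists, that ``fully invariant'' delivers the local invariance the cited equivalence requires, and that its standing regularity assumptions hold. If one wanted to bypass \Cref{rmk:invariant_manifold}, the genuine work would be to upgrade mere continuity of the orbit $t\mapsto\Sg_tm$ inside the $C^2$-manifold $\mathcal M$ to right-differentiability at $t=0$: fixing a local parametrisation $\chi\colon U\to V\cap\mathcal M$ around $m$, writing $\Sg_tm=\chi(y_t)$ for a continuous $U$-valued curve $y$ with $\chi(y_0)=m$, using that $D\chi(y_0)$ is an injective bounded operator out of $\R^d$ (hence boundedly left-invertible) to pass regularity between $y$ and the orbit, and exploiting reflexivity of $\mathcal H$ to take weak limits of difference quotients; one then obtains $\partial_xm=D\chi(y_0)\dot y_{0+}\in T_m\subseteq\mathcal H$, i.e.\ $m\in\dom(\partial_x)$. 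This is the classic argument alluded to just before the statement.
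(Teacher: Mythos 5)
Your proof is correct and follows essentially the same route as the paper's: both invoke full invariance for a single admissible pair $(\beta,\Sigma)\in\Lambda_{\text{HJMM}}$ and then read off condition \ref{rmk:invariant_manifold21} from the equivalence in \Cref{rmk:invariant_manifold} (i.e.\ \cite[Theorem 3]{filipovic_invariant}). The only difference is cosmetic: the paper picks a tangential $C^1$ diffusion supported in a neighbourhood of the given point, whereas you take $\Sigma\equiv 0$, which is equally admissible and slightly reduces the bookkeeping.
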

\begin{proof}
	Let $h\in\mathcal M$ and let $U\subseteq\mathcal H$ be an open neighbourhood of $h$. Choose $\Sigma:\mathcal{H}\rightarrow L(\mathbb R^d,\mathcal{H})$ to be of class $C^1$ with support in $U$ and $\Sigma(m)\in T_m$ for any $m\in U\cap \mathcal M$. Since $\mathcal M$ is totally invariant we find that $\mathcal M$ is invariant for the HJMM-equation with diffusion coefficient $\Sigma$. Hence, \cite[Theorem 3]{filipovic_invariant} yields that we have
\begin{equation*}
	h\in \mathcal M\cap U\subseteq \dom(\partial_x).
\end{equation*}
Since $h$ was arbitrary we have $\mathcal M\subseteq \dom(\partial_x)$.
\end{proof}
In order to prove our main result on the form of invariant manifolds, we consider here the forward rate $f$ and its transformation $h:=\Psi f$. As stated in \Cref{rem:factor_model}, it is natural to construct manifolds as a factor model by defining $f_t:=g(\cdot ,Y_t)$ with an appropriate function $g$ and $d$-dimensional diffusion Process $Y$. We will now adapt this method to the transformed process $h$ with an appropriate factor model in the form of a function $g$. The next result shows regularity results of the function $g$. In particular, we find that $g$ is sufficiently smooth for the application of the It\^o formula.
\begin{lemma}\label{l:g is C12}
	Let $U\subseteq\mathbb R^d$ be open and connected, $\chi\in C^2(U,\mathcal M)$ and assume that $\partial_x\circ\chi\in C(\inn(D),\mathcal{H})$ where $D:=\dom(\partial_x\circ\chi)$ is its domain. Define
	\begin{equation}\label{eq:g_is_c12_1}
		g(x,y) := \Psi(\chi(y))(x),\quad \text{for }x\geq 0,y\in U.
	\end{equation}
	Then $g\in C^{(1,2)}(\mathbb R_+\times U,\mathbb R)$.
\end{lemma}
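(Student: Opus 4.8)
The goal is to show that $g(x,y) = (\Psi(\chi(y)))(x) = 1 - \exp(-\J_x \chi(y))$ is of class $C^{(1,2)}$ on $\mathbb R_+ \times U$, i.e. jointly continuous together with its first $x$-derivative and its first and second $y$-derivatives (and the mixed ones up to this order). The natural strategy is to factor $g$ through the transformation $\tilde\Psi$ studied in \Cref{lem:psi_properties} and to use the chain rule, handling the $x$- and $y$-directions separately since the required order of differentiability differs.

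\medskip
\noindent\textbf{Step 1: Differentiability in $y$.} By \Cref{lem:psi_properties}, for each fixed $x\ge 0$ the map $\delta_x\circ\Psi\in C^\infty(\mathcal H,\mathbb R)$, and $\chi\in C^2(U,\mathcal M)\subseteq C^2(U,\mathcal H)$ by hypothesis. Hence $y\mapsto g(x,y) = (\delta_x\circ\Psi)(\chi(y))$ is a composition of a $C^\infty$ map with a $C^2$ map, so it is $C^2$ in $y$ for each fixed $x$, with
\[
D_y g(x,y) = D(\delta_x\circ\Psi)(\chi(y))\,D\chi(y),\qquad
D_y^2 g(x,y) = \big(\text{analogous second-order chain rule expression}\big).
\]
Here one uses the explicit form $(\delta_x\circ\Psi)(f) = 1 - \exp(-\J_x f)$ with $\J_x$ linear and bounded, so that $D(\delta_x\circ\Psi)(f)v = \exp(-\J_x f)\,\J_x v$ and the second derivative is $-\exp(-\J_x f)\,(\J_x v)(\J_x w)$; these expressions depend continuously on $(x,f)$ because $x\mapsto \J_x$ is continuous in operator norm (or at least strongly, which suffices here). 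This gives, after inserting $f=\chi(y)$, joint continuity of $D_y g$ and $D_y^2 g$ in $(x,y)$.

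\medskip
\noindent\textbf{Step 2: Differentiability in $x$ and the mixed derivatives.} By \Cref{lem:psi_properties}, $\Psi f\in C^1(\mathbb R_+,\mathbb R)$ with $(\Psi f)'(x) = f(x)\exp(-\J_x f)$. Applying this with $f=\chi(y)$ gives
\[
\partial_x g(x,y) = \chi(y)(x)\,\exp(-\J_x\chi(y)) = \big(\delta_x\chi(y)\big)\exp(-\J_x\chi(y)).
\]
Now $\delta_x\chi(y)$ is where the hypothesis $\partial_x\circ\chi\in C(\inn(D),\mathcal H)$ enters: since $\mathcal M\subseteq\dom(\partial_x)$ (by \Cref{l:M in dx}), one has $\delta_x\chi(y) = (\partial_x\chi(y))(0) = \delta_0\big((\partial_x\circ\chi)(y)\big)$ — more precisely, $\delta_x\chi(y)$ is a jointly continuous function of $(x,y)$ because $\Sg_x\chi(y)$ depends continuously on $(x,y)$ ($c_0$-semigroup, $\chi$ continuous) and $\delta_0$ is bounded. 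Thus $\partial_x g$ is a product of jointly continuous functions and hence jointly continuous on $\mathbb R_+\times U$. For the mixed derivatives $\partial_x D_y g$ and $\partial_x D_y^2 g$: differentiate the formula for $\partial_x g$ in $y$ (or $D_y g$ in $x$) using the product rule; each resulting term is a product of factors already shown to be jointly continuous — $\exp(-\J_x\chi(y))$, $\J_x D\chi(y)$, $\delta_x\chi(y)$, $\delta_x D\chi(y)$ — so continuity follows. Symmetry of mixed partials holds by the standard theorem once both iterated derivatives exist and are continuous.

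\medskip
\noindent\textbf{Main obstacle.} The only genuinely delicate point is justifying that $\partial_x g(x,y)$ and the mixed derivative $\partial_x D_y g(x,y)$ are \emph{jointly} continuous (not merely separately), and in particular controlling $\delta_x\chi(y)$ and $\delta_x D\chi(y)$ uniformly near the boundary $x=0$ and across $y\in U$. This is handled by writing $\delta_x = \delta_0\Sg_x$ with $\delta_0$ bounded linear (Assumption \ref{H1}) and $\Sg_x$ strongly continuous (Assumption \ref{H2}), so that $(x,y)\mapsto \Sg_x\chi(y)$ and $(x,y)\mapsto \Sg_x D\chi(y)v$ are continuous into $\mathcal H$ — here one uses that $D\chi$ is continuous and that on compact subsets of $U$ the family $\{\chi(y)\}$ and $\{D\chi(y)v : |v|\le 1\}$ are bounded, so the standard joint-continuity argument for $c_0$-semigroups applies. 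Everything else is a routine chain-rule bookkeeping using the already-established smoothness of $\delta_x\circ\Psi$ and $\tilde\Psi$ from \Cref{lem:psi_properties}.
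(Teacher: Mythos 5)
Your proposal is correct and follows essentially the same route as the paper: factor $g$ through the smooth scalar map $a\mapsto 1-e^{-a}$, obtain the $y$-regularity from $\chi\in C^2$ together with the continuity of $\J_x$ and the embedding of $\mathcal H$ into $C(\mathbb R_+,\mathbb R)$ with local uniform convergence, and identify $\partial_x g(x,y)=\delta_x(\chi(y))\exp(-\J_x\chi(y))$ with joint continuity coming from $\delta_x=\delta_0\Sg_x$. One small slip: the parenthetical identity $\delta_x\chi(y)=(\partial_x\chi(y))(0)$ is false (since $\delta_x h=h(x)$ while $(\partial_x h)(0)=h'(0)$), but it plays no role, as your subsequent semigroup argument for the joint continuity of $(x,y)\mapsto\delta_0\Sg_x\chi(y)$ is the correct one and matches the paper's.
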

\begin{proof}
	The map $T:a\mapsto 1-\exp(-a)$ is smooth. We have
	\begin{equation*}
		g(x,y) = T(\J_x(\chi(y))).
	\end{equation*}
	Consequently, $g\in C^{(1,2)}(\mathbb R_+\times U,\mathbb R)$ if $l(x,y):= \J_x(\chi(y))$ is in $C^{(1,2}(\mathbb R_+\times U,\mathbb R)$. Since convergence on $\mathcal{H}$ yields local uniform convergence due to \Cref{l:C Embedding}, we find that it is in $C^{(0,2)}(\mathbb R_+\times U,\mathbb R)$. We have $\partial_xl(x,y)=\delta _x(\chi (y))$ and hence $l\in C^{(1,2)}(\mathbb R_+\times U,\mathbb R)$.
\end{proof}
We are now prepared to use It\^o's Lemma to characterise the dynamics of the factor model $g$ in view of the dynamics of the HJMM-equation. This results in a drift condition for the invariant manifold induced by $g$ for a model fulfilling the NAFLVR condition for a model specified by a diffusion coefficient.
\begin{lemma}\label{lem:drift_cond}
	Assume the requirements of \Cref{l:g is C12} and let $g$ be the function given therein. Furthermore, let $C\subset U$ be a compact subset and $\sigma\in\mathbb R^{d\times d}$ be arbitrary. Let $\varphi\in C^2(\R^d, [0,1])$ be a bump function with $\varphi (y)=1$ for $y\in C$ and $\varphi (y)=0$ for $y\in\R^d\setminus U$ and define $\bar\sigma (y):=\varphi (y)\sigma$. There exists a measurable function $b^{\sigma}:\mathbb R^d\rightarrow \mathbb R^d, y\mapsto b^\sigma (y)$ depending on $\sigma$ such that
	\begin{equation}\label{eq:g_differential}
            \begin{aligned}
		      &\partial_xg(x,y)-\partial _xg(0,y)+\partial _xg(0,y)g(x,y)\\
              &= \langle\nabla_yg(x,y),b^\sigma (y)\rangle + \frac12 \sum_{i,j=1}^d \partial_{y_i}\partial_{y_j}g(x,y)\bar\sigma_{ij}(y)\bar\sigma _{ji}(y)
	   \end{aligned}
        \end{equation}
	for any $\sigma\in\mathbb R^{d\times d}$, $x\geq 0$ and $y\in U$.
\end{lemma}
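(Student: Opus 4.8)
The plan is to realise the prescribed tangential diffusion on a coordinate patch as an honest HJMM diffusion coefficient, to invoke full invariance of $\mathcal M$ to obtain a solution running on $\mathcal M$, to lift it to an $\R^d$-valued factor process $Y$ (as in \Cref{rem:factor_model}), and finally to read off \eqref{eq:g_differential} by matching the It\^o expansion of $g(\cdot,Y_t)=\Psi f_t$ against the transported dynamics of \Cref{lem:h_dynamics}. Two elementary observations do the bookkeeping: since $g(0,y)=\Psi(\chi(y))(0)=1-\exp(-\J_0\chi(y))=0$ for every $y$, one has $\partial_{y_i}g(0,y)=0$, while the first item of \Cref{lem:psi_properties} gives $\partial_x g(0,y)=\chi(y)(0)=\delta_0(\chi(y))$ and the last item of \Cref{lem:psi_properties} gives $\delta_0\bigl(\Psi^{-1}(g(\cdot,y))\bigr)=\partial_x g(0,y)$ along $\mathcal M$. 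Feeding these into the expression for $\beta^h$ in \eqref{eq:h_dynamics2} shows that the left-hand side of \eqref{eq:g_differential} is exactly the drift $\partial_x h+\beta^h(h)$ of the $h$-equation evaluated at $h=g(\cdot,y)$, with $\partial_x h$ read as the pointwise right derivative of $g(\cdot,y)$ in the first variable.

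Fix $\sigma$, the bump function $\varphi$ and $\bar\sigma=\varphi\sigma$. On $\chi(U)$ I would set $\Sigma^j(\chi(y))(x):=\partial_x\bigl((1-g(x,y))^{-1}\sum_{i=1}^d\partial_{y_i}g(x,y)\,\bar\sigma_{ij}(y)\bigr)$; here $1-g(x,y)=\exp(-\J_x\chi(y))>0$, the expression is $\mathcal H$-valued and $C^1$ in $y$ by \Cref{l:g is C12}, it vanishes for $y$ outside the compact set $\supp\varphi$, and, by inverting the identity for $\Sigma^h$ in \eqref{eq:h_dynamics2}, it is chosen so that the transported diffusion satisfies $\Sigma^h(g(\cdot,y))e_j=\sum_i\partial_{y_i}g(\cdot,y)\,\bar\sigma_{ij}(y)$. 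Next I would extend $\Sigma$ to all of $\mathcal H$, keeping it $C^1$, Lipschitz and with support inside a bounded set, by means of a tubular-neighbourhood retraction onto the compact piece $\chi(\supp\varphi)$ of $\mathcal M$ together with a cutoff; with the associated HJM drift $\beta$ this yields $(\beta,\Sigma)\in\Lambda_{\text{HJMM}}$. Since $\mathcal M$ is fully invariant under the HJMM-equation, for every $y_0\in U$ there is a continuous $\mathcal H$-valued solution $f$ of the HJMM-equation with this $\Sigma$, with $f_0=\chi(y_0)$ and $f$ running on $\mathcal M$ (existence is also guaranteed by \Cref{lem:sol_existence}). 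On the stochastic interval $[0,\tau)$ on which $f$ stays in the patch $\chi(U)$ (a positive stopping time, by continuity of $f$ and openness of $\chi(U)$ in $\mathcal M$) set $Y_t:=\chi^{-1}(f_t)$; since $\mathcal M\subseteq\dom(\partial_x)$ by \Cref{l:M in dx} and $\partial_x\circ\chi$ is continuous, $s\mapsto\|\partial_x f_s\|$ is locally bounded, so $f$ is a strong solution there and $Y$ is an $\R^d$-valued continuous It\^o process $dY_t=b_t\,dt+a_t\,dW_t$ with $\Psi f_t=g(\cdot,Y_t)$.

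It then remains to apply It\^o's formula to $g(x,\cdot)\in C^2(U,\R)$ along $Y$ and to compare, for each fixed $x$, with the dynamics $dh_t(x)=\bigl(\partial_x g(x,Y_t)+\beta^h(h_t)(x)\bigr)dt+\Sigma^h(h_t)(x)\,dW_t$ coming from \Cref{lem:h_dynamics}. Uniqueness of the semimartingale decomposition, combined with the linear independence of $\{\partial_{y_i}g(\cdot,y)\}_{i=1}^d$ in $\mathcal H$ --- which holds because $\partial_{y_i}g(\cdot,y)=(1-g(\cdot,y))\J_\cdot(\partial_{y_i}\chi(y))$ with $\J_\cdot$ injective, $1-g>0$ and $D\chi(y)$ of rank $d$ --- first forces $a_t=\bar\sigma(Y_t)$; comparing the drift terms then shows that the left-hand side of \eqref{eq:g_differential} at $y=Y_t$, minus the second-order term $\tfrac12\sum_{i,j}\partial_{y_i}\partial_{y_j}g(x,Y_t)\,\bar\sigma_{ij}(Y_t)\bar\sigma_{ji}(Y_t)$ appearing there, equals $\langle\nabla_y g(x,Y_t),b_t\rangle$, hence lies, as a function of $x$, in $T_y:=\Span\{\partial_{y_i}g(\cdot,y)\}_i$ at $y=Y_t$, for a.e.\ $(t,\omega)$. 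Denoting this left-hand side by $\xi(\cdot,y)$, the Gram matrix $G(y):=\bigl(\langle\partial_{y_i}g(\cdot,y),\partial_{y_j}g(\cdot,y)\rangle_{\mathcal H}\bigr)_{ij}$ is invertible, so $b^\sigma(y):=G(y)^{-1}\bigl(\langle\partial_{y_i}g(\cdot,y),\xi(\cdot,y)\rangle_{\mathcal H}\bigr)_i$ is a continuous, hence measurable, map, and \eqref{eq:g_differential} at $y$ is equivalent to $\xi(\cdot,y)\in T_y$. The stochastic identity gives $\xi(\cdot,Y_t)\in T_{Y_t}$ for a.e.\ $t<\tau$; since $y\mapsto P_y^\perp\xi(\cdot,y)$ (the orthogonal projection onto $T_y^\perp$) is continuous and $t\mapsto P_{Y_t}^\perp\xi(\cdot,Y_t)$ vanishes for a.e.\ $t<\tau$ along the continuous path $Y$ with $Y_0=y_0$, it vanishes at $t=0$. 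As $y_0\in U$ is arbitrary, \eqref{eq:g_differential} holds throughout $U$.

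I expect the only genuine difficulty to be the global construction of $\Sigma\in\mathcal B$: turning the patchwise formula into a $C^1$, Lipschitz diffusion coefficient on all of $\mathcal H$ with support in a bounded set and compact trace on $\mathcal M$ --- this is precisely where the compact-support requirement built into \Cref{def:hjmm_invariant} and the bump function $\varphi$ are needed. The remaining points (the strong-solution property of $f$, the lift to $Y$, and the final continuity argument promoting \eqref{eq:g_differential} from the trajectory of $Y$ to all of $U$) are routine.
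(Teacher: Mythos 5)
Your proposal is correct and follows essentially the same route as the paper: build $\Sigma=\varphi\cdot(D\chi)\sigma$ on the chart, invoke full invariance to get an $\mathcal M$-valued solution, lift it to $Y=\chi^{-1}(f)$, and match the It\^o expansion of $g(x,Y_t)$ against the transported dynamics of \Cref{lem:h_dynamics}. The only divergence is the last step, where the paper conditions on $Y_t=y$ and cites an external argument to pass from an a.s.\ identity to all of $U$, while you use continuity of the projection onto $T_y^\perp$ along paths started at arbitrary $y_0$ --- an equally valid (and more self-contained) way to finish, modulo the minor point that the Gram-matrix pairing should be taken in a space that demonstrably contains the $\partial_{y_i}g(\cdot,y)$, e.g.\ by replacing $\langle\cdot,\cdot\rangle_{\mathcal H}$ with finitely many point evaluations.
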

\begin{proof}
	Assume the requirements of \Cref{l:g is C12}, which yields that $g(x,y):=\Psi (\chi (y))(x)$, $x\geq 0$, $y\in U$ is $C^{(1,2)}(\mathbb R_+\times U,\mathbb R)$. Let $\sigma\in\mathbb R^{d\times d}$ be arbitrary. Define the function $\tilde\Sigma :\mathcal H\rightarrow L(\mathbb R^d,\mathcal H)$ as
	\begin{equation*}
		\begin{aligned}
			h\mapsto\begin{cases} D\chi (\chi ^{-1}(h)) \sigma,\quad &h\in\ran (\chi),\\
				\qquad \bar\Sigma, &\text{otherwise,}
			\end{cases}
		\end{aligned}
	\end{equation*}
where $\bar\Sigma$ is an arbitrary continuous continuation of $D\chi (\chi ^{-1}(h))\sigma$. Let now $C\subset U$ be a compact set and $\varphi\in C^2(\R^d, [0,1])$ be a bump function on $C$, as specified in the statement.
Set now $\Sigma (h):=\tilde\Sigma (h)\varphi (\chi ^{-1}(h))$. It is easy to check that, by construction, $\Sigma$ satisfies the requirements of \Cref{def:hjmm_invariant}. Now, since $\mathcal M$ is fully invariant under the HJMM-equation, there is a solution $f$ to the HJMM-equation with diffusion coefficient $\Sigma$ such that $\mathbb P(f_t\in\mathcal M)=1$ for $t\geq 0$. Define now $Y_t:=\chi ^{-1}(f_t)$. By a simple stopping argument, we may restrict ourselves to the case of just a single chart $\chi$. Define $\bar\sigma (y):=\varphi (y)\sigma$. By the Inverse Function Theorem, $D\chi ^{-1}(h)=(D\chi (\chi ^{-1}(h))^{-1}$ and therefore, from It\^o's lemma we get
	\begin{equation*}
		dY_t=\beta _tdt+D\chi ^{-1}(f_t)\Sigma (f_t)dW_t=\beta_tdt+\bar\sigma (Y_t) dW_t,
	\end{equation*}
	where $\beta _t:=\text{drift}(f_t)D\chi ^{-1}(f_t)+\frac12\Tr \left(D^2\chi ^{-1}(f_t)\left( \Sigma (f_t),\Sigma (f_t)\right)\right)$. Denote the drift of the process $g(x,Y_t)$ as $\alpha _t:=\text{drift}(g(x,Y_t))$. By It\^o's lemma, $\alpha $ satisfies
	\begin{equation*}
		\alpha _t=\langle\nabla _yg(x,Y_t),\beta _t\rangle+\frac12\sum _{i,j=1}^d\partial_{y_i}\partial _{y_j}g(x,Y_t)\bar\sigma _{ij}(Y_t)\bar\sigma _{ji}(Y_t)\quad\text{for all } x,t\geq 0.
	\end{equation*}
	On the other hand, by \Cref{lem:h_dynamics}, we find that
	\begin{equation*}
		\alpha _t=\partial _xg(x,Y_t)-\partial _xg(0,Y_t)g(x,Y_t)\quad\text{for all }x,t\geq 0.
	\end{equation*}
	Define now for the given $\sigma$ the functions
	\begin{equation*}
		\begin{aligned}
			\alpha _t(y)&:=\mathbb E[\alpha _t\vert Y_t=y],\\
			b^{\sigma}_t(y)&:=\mathbb E[\beta _t\vert Y_t=y].
		\end{aligned}
	\end{equation*}
	Then, for all $x\geq 0$ and any $(y,t)\in U\times\mathbb R_+$ we have $\mathbb P^{Y_t}$-a.s.
	\begin{equation*}
	   \begin{aligned}	
            &\partial_xg(x,y)-\partial _xg(0,y)+\partial _xg(0,y)g(x,y) \\
            &= \langle\nabla_yg(x,y),b_t^\sigma (y)\rangle + \frac12 \sum_{i,j=1}^d \partial_{y_i}\partial_{y_j}g(x,y)\bar\sigma_{ij}(y)\bar\sigma _{ji}(y)
	   \end{aligned}
        \end{equation*}
	The claim now follows by the same arguments as \cite[Lemma 2.9]{shijie}.
\end{proof}
Now we can proceed by showing the affine nature of the invariant manifold for the HJMM-equation under our specifications. The next results in essence shows that locally, the image of the factor model $g$ lies in a finite-dimensional affine subspace.
\begin{lemma}\label{l:locally affine}
	Let $U\subseteq\mathbb R^d$ be open and connected and $\chi:U\rightarrow \mathcal M$, be a local map, that is\ $\chi \in C^2(U, \mathcal M)$ with $\rank(D\chi(z))=d$ for any $z\in U$. Define $G:U\rightarrow C(\mathbb R_+,\mathbb R),z\mapsto \Psi(\chi(z))$ where $\Psi$ is defined in \Cref{eq:psi}. 
Then $\ran(G)\subseteq V$ where $V$ is a $d$-dimensional affine subspace of $C(\mathbb R_+,\mathbb R)$.
\end{lemma}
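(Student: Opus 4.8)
The plan is to exploit that the left-hand side of the drift identity \eqref{eq:g_differential} in \Cref{lem:drift_cond} does not depend on the matrix $\sigma$: comparing two choices of $\sigma$ will force every second-order $y$-derivative of $G$ into the $d$-dimensional distribution spanned by the first-order derivatives, and a rigidity argument will then show that this distribution is constant along $U$, which yields the affine subspace at once. Set $g(x,y):=\Psi(\chi(y))(x)$ as in \Cref{l:g is C12}, so $G(y)(x)=g(x,y)$, and put $\psi_k(y):=\partial_{y_k}G(y)\in C(\mathbb R_+,\mathbb R)$. By \Cref{l:g is C12} (with the embedding \Cref{l:C Embedding}) we have $g\in C^{(1,2)}(\mathbb R_+\times U,\mathbb R)$, so $y\mapsto\psi_k(y)$ is a $C^1$ map into $C(\mathbb R_+,\mathbb R)$ with $\partial_{y_l}\psi_k(y)=\partial_{y_l}\partial_{y_k}G(y)$. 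I would first record that $DG(y)$ has rank $d$: $D\chi(y)$ has rank $d$ by hypothesis, and from $(\Psi f)'(x)=f(x)\exp(-\J_xf)$ (\Cref{lem:psi_properties}) the differential $D\Psi(f)[h](x)=\exp(-\J_xf)\J_xh$ vanishes for all $x$ only if $h=0$; hence $\rank(DG(y))=d$, the functions $\psi_1(y),\dots,\psi_d(y)$ are linearly independent, and $E(y):=\Span\{\psi_1(y),\dots,\psi_d(y)\}$ is a $d$-dimensional subspace equal to $\{\langle\nabla_y g(\cdot,y),v\rangle:v\in\mathbb R^d\}$.

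Next I would apply \Cref{lem:drift_cond}. For a fixed $y_0\in U$ choose the compact set $C$ to be a closed ball around $y_0$ inside $U$, so that the bump function satisfies $\varphi(y_0)=1$, i.e. $\bar\sigma(y_0)=\sigma$. Then \eqref{eq:g_differential} holds at $y_0$ for every $\sigma\in\mathbb R^{d\times d}$, its left-hand side is independent of $\sigma$, and the term $\langle\nabla_y g(\cdot,y_0),b^\sigma(y_0)\rangle$ always lies in $E(y_0)$; subtracting the $\sigma=0$ instance gives, as an identity in $C(\mathbb R_+,\mathbb R)$,
\[
\frac12\sum_{i,j=1}^d \partial_{y_i}\partial_{y_j}G(y_0)\,\sigma_{ij}\sigma_{ji}\in E(y_0)\qquad\text{for every }\sigma .
\]
Choosing $\sigma$ with a single nonzero diagonal entry yields $\partial_{y_k}^2G(y_0)\in E(y_0)$, and choosing $\sigma$ with its only nonzero entries in positions $(k,l)$ and $(l,k)$ yields $\partial_{y_k}\partial_{y_l}G(y_0)\in E(y_0)$. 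Hence $\partial_{y_k}\partial_{y_l}G(y)\in E(y)$ for all $k,l\in\{1,\dots,d\}$ and all $y\in U$.

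Finally I would show $y\mapsto E(y)$ is constant and conclude. Fix $y_0$; since $\psi_1(y_0),\dots,\psi_d(y_0)$ are linearly independent continuous functions there are points $x_1,\dots,x_d\in\mathbb R_+$ with $B(y_0):=(\psi_k(y_0)(x_a))_{a,k}$ invertible, hence $B(y)$ is invertible on a ball $\mathcal U\ni y_0$. Put $\tilde\psi_m(y):=\sum_k\psi_k(y)\,(B(y)^{-1})_{km}$; these still span $E(y)$ and satisfy $\delta_{x_a}(\tilde\psi_m(y))=\delta_{am}$. Differentiating in $y$ and using the previous paragraph, $\partial_{y_l}\tilde\psi_m(y)\in E(y)$, while $\delta_{x_a}(\partial_{y_l}\tilde\psi_m(y))=\partial_{y_l}\delta_{am}=0$; since $v\mapsto(\delta_{x_a}(v))_{a}$ sends the basis $\{\tilde\psi_m(y)\}$ of $E(y)$ to the standard basis of $\mathbb R^d$, it is an isomorphism on $E(y)$, so $\partial_{y_l}\tilde\psi_m\equiv0$ on $\mathcal U$. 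Thus each $\tilde\psi_m$ is constant on $\mathcal U$, so $E(y)=E(y_0)$ there, and since $U$ is connected $E(y)=E(y_0)=:W$ for every $y\in U$, with $\dim W=d$. For arbitrary $y\in U$, joining $y_0$ to $y$ by a piecewise-$C^1$ path $\gamma$ in $U$ gives $G(y)-G(y_0)=\int_0^1\sum_k\gamma'(t)_k\,\psi_k(\gamma(t))\,dt\in W$, because $\psi_k(\gamma(t))\in E(\gamma(t))=W$ and $W$ is finite-dimensional. Hence $\ran(G)\subseteq G(y_0)+W=:V$, a $d$-dimensional affine subspace of $C(\mathbb R_+,\mathbb R)$.

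The step I expect to be the main obstacle is the last one: the inclusions $\partial_{y_i}\partial_{y_j}G\in E$ only say that the tangent distribution $E(\cdot)$ is parallel, and turning this into genuine constancy requires the Frobenius/ODE-type normalisation carried out above via the frame $\tilde\psi_m$ and point-evaluation functionals; a secondary technical nuisance, handled by \Cref{l:g is C12} and \Cref{l:C Embedding}, is to make sure all the differentiations of $G$ are read off in the correct $C(\mathbb R_+,\mathbb R)$-sense.
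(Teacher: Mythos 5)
Your proposal is correct, and its first half coincides with the paper's own argument: both exploit that the left-hand side of \eqref{eq:g_differential} does not depend on $\sigma$ and polarise over suitable choices of $\sigma$ (you subtract the $\sigma=0$ instance and use rank-one and symmetric rank-two perturbations; the paper subtracts the $\mathbbm 1_d$ instance from the $\mathbbm 1_d+E_{ij}$ instances) to conclude that every $\partial_{y_i}\partial_{y_j}G(y)$ lies in $E(y):=\Span\{\partial_{y_1}G(y),\dots,\partial_{y_d}G(y)\}$. Where you genuinely diverge is the passage from this span condition to the affine conclusion. The paper records the relations in the form $\partial_{y_i}\partial_{y_j}g=\langle\nabla_yg,\eta_{i,j}\rangle$ with measurable, locally bounded coefficients $\eta_{i,j}$ and feeds them into \Cref{p:tcdf}, a Gr\"onwall-type rigidity result showing that any solution of this PDE system is determined by $(g(\cdot,0),\nabla_yg(\cdot,0))$, so that $g(x,y)=g(x,0)+\nabla_yg(x,0)A(y)$ for a universal map $A$. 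You instead prove directly that the distribution $y\mapsto E(y)$ is constant: you normalise the frame $\partial_{y_k}G(y)$ against point evaluations at sampling points where the evaluation matrix $B(y)$ is invertible, note that the normalised frame has $y$-derivative lying in $E(y)$ while being annihilated by the evaluations, hence vanishing, and then recover $G(y)-G(y_0)\in W$ by integrating $DG$ along paths in the (now fixed) $d$-dimensional space $W$. Both routes are sound. The paper's buys a reusable, quantitative statement (\Cref{p:tcdf}) that works for any solution of the second-order system under mere measurability and local boundedness of the $\eta_{i,j}$; yours is more self-contained, avoids Gr\"onwall entirely, and never needs to extract the coefficient functions $\eta_{i,j}$ at all---only the pointwise membership $\partial_{y_i}\partial_{y_j}G(y)\in E(y)$---at the modest price of the sampling-point device and the $C^1$-dependence of $B(y)^{-1}$, both of which you justify adequately.
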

The proof of the above Lemma is an adaptation of the proof in \cite[Proposition 2.14]{shijie} to our case.
\begin{proof}
	Let $C\subset U$ be a compact subset. Without loss of generality we may assume that $0\in C$. Define for $i,j=1,\dots, d$ the symmetric matrices $E_{ij}\in\R^{d\times d}$ as follows
	\begin{equation*}
		\begin{aligned}
			(E_{ij})_{kl}=\begin{cases}
				1\quad &\text{for }k=i, l=j,\\
				1\quad &\text{for }k=j, l=i,\\
				0\quad &\text{else.}
			\end{cases}
		\end{aligned}
	\end{equation*}
	For a $d\times d$-matrix $a$ and a $C^2$ bump function $\varphi$ on $C$ we define, similarly to the proof of \Cref{lem:drift_cond} the $C^1$-function
	\begin{equation*}
            \begin{aligned}
                \Sigma_a:\ &\mathcal M\rightarrow L(\mathbb R^d,\mathcal{H}), \\
                &f\mapsto \varphi (\chi ^{-1}(f))D\chi(\chi^{-1}(f))a
              \end{aligned}
	\end{equation*}
	and denote by
	\begin{equation*}
            \begin{aligned}
		  \beta_a:\ &\mathcal M\rightarrow \mathcal{H}, \\
            &f\mapsto \frac12\Sigma_a(f)\J _{\cdot}\Sigma_a(f)^*
          \end{aligned}
	\end{equation*}
	the corresponding HJM-drift coefficient. Since $\mathcal M$ is fully invariant under the HJMM-equation we have that it is invariant for solutions to the HJMM-equation with diffusion coefficient $\tilde\Sigma_a$ where $\tilde\Sigma_a$ is chosen to be any continuous continuation of $\Sigma_a$ outside $\mathcal M$.
	Define $g:\mathbb R_+\times U\rightarrow \mathbb R, g:(x,y)\mapsto G(y)(x)$. \Cref{l:g is C12} yields that $g\in C^{(1,2)}(\mathbb R_+\times U,\mathbb R)$. By \Cref{lem:drift_cond}, there is a function $b^a:\mathbb R^d\rightarrow\mathbb R ^d$, such that
	\begin{equation}\label{eq:locally_affine1}
            \begin{aligned}
		      &\partial_xg(x,y)-\partial _xg(0,y)+\partial _xg(0,y)g(x,y) \\
              & = \langle\nabla_yg(x,y),b^a (y)\rangle + \frac12 \sum_{i,j=1}^d \partial_{y_i}\partial_{y_j}g(x,y)\sigma_{ij}(y)\sigma _{ji}(y)
	   \end{aligned}
        \end{equation}
	Consider now \Cref{eq:locally_affine1} for the two cases $a =\mathbbm{1}_d$ and $a = \mathbbm{1}_d+E_{ii}$:
	\begin{equation*}
		\begin{aligned}
			\partial_x g(x,y)-\partial _xg(0,y)+\partial _xg(0,y)g(x,y) &= \langle\nabla_y g(x,y),b^{\mathbbm 1_d}(y)\rangle+\frac{1}2 \sum_{j=1}^d\partial_{y_j}^2g(x,y),\\
			\partial_x g(x,y)-\partial _xg(0,y)+\partial _xg(0,y)g(x,y) &= \langle\nabla_yg(x,y),b^{\mathbbm 1_d+E_{ii}}(y)\rangle\\
										    &+\frac32 \partial_{y_i}^2g(x,y) + \frac12 \sum_{j=1}^d\partial_{y_j}^2g(x,y),
		\end{aligned}
	\end{equation*}
	for any $x\geq 0$ and any $y\in U$. Since the left-hand-sides are identical, we then have
	\begin{equation*}
		\begin{aligned}
			&\langle\nabla_yg(x,y),b^{\mathbbm 1_d}(y)\rangle + \frac{1}2 \sum_{j=1}^d\partial_{y_j}^2g(x,y)\\
			&= \langle\nabla_yg(x,y),b^{\mathbbm 1_d+E_{ii}}(y)\rangle + \frac32 \partial_{y_i}^2g(x,y) + \frac{1}2 \sum_{j=1}^d\partial_{y_j}^2g(x,y).
		\end{aligned}
	\end{equation*}
	Hence,
	\begin{equation*}
		\partial_{y_i}^2g(x,y) = \left\langle\nabla_yg(x,y),\left(\frac23(b^{\mathbbm 1_d}-b^{\mathbbm 1_d+E_{ii}})(y)\right)\right\rangle = \langle\nabla_yg(x,y),\eta_{i,i}(y)\rangle,
	\end{equation*}
	where $\eta_{i,i}:\mathbb R^d\rightarrow \mathbb R^d$, $\eta_{i,i}:=\frac23(b^{\mathbbm 1_d}-b^{\mathbbm 1_d+E_{ii}})$. $\eta_{i,i}$ is measurable in $y$ and locally bounded by its construction as $b^{\mathbbm 1_d}$ and $b^{\mathbbm 1_d+ E_{ii}}$ are locally bounded.
	
	Repeating the same procedure for $\sigma = \mathbbm 1_d$ and $\sigma = \mathbbm 1_d + E_{ij}$ where $i,j=1,\dots,d$ and $i\neq j$, we get analogously to the previous result,
	\begin{equation*}
		\begin{aligned}
			&\langle\nabla_yg(x,y),b^{\mathbbm 1_d}(y)\rangle + \frac{1}2 \sum_{k=1}^d\partial_{y_k}^2g(x,y)\\
			&= \langle\nabla_yg(x,y),b^{\mathbbm 1_d+E_{ij}}(y)\rangle + \partial_{y_i}\partial_{y_j}g(x,y) + \frac{1}2 \sum_{k=1}^d\partial_{y_k}^2g(x,y),
		\end{aligned}
	\end{equation*}
	Hence,
	\begin{equation*}
		\partial_{y_i}\partial_{y_j}g(x,y) = \langle\nabla_yg(x,y),\left(b^{\mathbbm 1_d}-b^{\mathbbm 1_d+E_{ij}}\right) (y)\rangle = \langle\nabla_yg(x,y),\eta_{i,j}(y)\rangle,
	\end{equation*}
	where $\eta_{i,j}:= b^{\mathbbm 1_d}-b^{\mathbbm 1_d+E_{ij}}$, $\eta_{i,j}$ is measurable in $y$ and $\eta_{i,j}$ is locally bounded by its construction as $b^{\mathbbm 1_d}$ and $b^{\mathbbm 1_d+E_{ij}}$ are locally bounded.
	
	Using \Cref{eq:g_differential} with $\sigma=\mathbbm 1_d$ and multiplying it by $4$ and subtracting \Cref{eq:g_differential} with $\sigma=2\mathbbm 1_d$ yields
	\begin{equation*}
		\begin{aligned}
			&3\left(\partial_xg(x,y)-\partial_x g(0,y)+\partial _xg(0,y)g(x,y)\right)\\
            &= 4\langle\nabla_yg(x,y),b^{\mathbbm 1_d}(y)\rangle - \langle\nabla_yg(x,y),b^{2\mathbbm 1_d}(y)\rangle\\ 
												&= \langle3\nabla_yg(x,y),\gamma(y)\rangle,
		\end{aligned}
	\end{equation*}
	where $\gamma:\mathbb R^d\rightarrow \mathbb R^d, \gamma := (4b^{\mathbbm 1_d}-b^{2\mathbbm 1_d})/3$. Hence we get for any $i,j=1,\dots,d$, $x\geq 0$, $y\in U$ 
	\begin{align} 
		\partial_{y_i}\partial_{y_j}g(x,y) &= \langle\nabla_yg(x,y),\eta_{i,j}(y)\rangle, \label{eq: partial y}\\
		\partial_xg(x,y) &= \langle\nabla_yg(x,y),\gamma(y)\rangle. \label{eq:partial x}
	\end{align}
	
	Due to \Cref{eq: partial y} and \Cref{p:tcdf}, there exists a twice continuously differentiable function $A:\mathbb R^d\rightarrow \mathbb R^d$, where $A(0)=0$ such that
	\begin{equation*}
		g(x,y) = g(x,0) + \nabla_y g(x,0) A(y),\quad  x \geq 0,\ y\in U.
	\end{equation*}
	Define 
	\begin{equation*}
		V:=g(\cdot,0) + \Span\{\partial_{y_i} g(\cdot,0):i=1,\dots,d\}.
	\end{equation*}
	Then $V$ is an affine space of dimension at most $d$. Note that
	\begin{equation*}
		G(y) = g(\cdot,y) \in V,\quad y\in U.
	\end{equation*}
	Since $\rank(DG(y))=d$ we find that $I:=\{ G(y):y\in U\}$ is a $d$-dimensional $C^2$-manifold and the above shows that
	\begin{equation*}
		I\subseteq V.
	\end{equation*}
	Consequently, $\dim(V)=d$.
\end{proof}
In the next step, we utilise \Cref{l:locally affine} to show that the function $\Psi$ maps a manifold induced by a factor model into an affine subspace.
\begin{proposition}\label{p:Affine space}
	There is a $d$-dimensional affine space $V\subseteq C(\mathbb R_+,\mathbb R)$ such that
	\begin{equation*}
		\Psi(\mathcal M) \subseteq V.
	\end{equation*}
\end{proposition}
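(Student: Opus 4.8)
The plan is to globalise \Cref{l:locally affine} by a connectedness argument, the key point being that the $d$-dimensional affine space it produces is intrinsic to the chart image. Fix a local parametrisation $\chi\colon U\to\mathcal M$ with $U\subseteq\mathbb R^d$ open and connected, put $O:=\chi(U)$ and $G:=\Psi\circ\chi\colon U\to C(\mathbb R_+,\mathbb R)$. By \Cref{l:M in dx} we have $\mathcal M\subseteq\dom(\partial_x)$, so the regularity hypotheses of \Cref{l:g is C12} are available on $U$ and, exactly as in the proof of \Cref{l:locally affine}, $G$ is $C^2$ with $\rank(DG(z))=d$ for all $z\in U$. Since $G$ takes values in the $d$-dimensional affine space $V$ furnished by \Cref{l:locally affine}, the classical inverse function theorem (for $C^2$ maps between the $d$-dimensional spaces $U$ and $V$) shows that $\Psi(O)=G(U)$ is open in $V$. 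A nonempty open subset of a $d$-dimensional affine space contains $d+1$ affinely independent points, so its affine hull is the whole space; hence $V$ is uniquely determined by $O$, namely $V=V_O:=\mathrm{aff}(\Psi(O))$, and it is $d$-dimensional.

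Next I would check that $O\mapsto V_O$ is compatible with overlaps. Let $O_1=\chi_1(U_1)$ and $O_2=\chi_2(U_2)$ be two chart images with $O_1\cap O_2\neq\emptyset$. Since $O_1\cap O_2$ is open in $O_1$ and $\Psi\circ\chi_1$ is (by the previous paragraph) a local homeomorphism onto an open subset of $V_{O_1}$, the set $\Psi(O_1\cap O_2)$ is nonempty and open in $V_{O_1}$, so $\mathrm{aff}(\Psi(O_1\cap O_2))=V_{O_1}$. On the other hand $\Psi(O_1\cap O_2)\subseteq\Psi(O_2)\subseteq V_{O_2}$, whence $V_{O_1}\subseteq V_{O_2}$, and by symmetry $V_{O_1}=V_{O_2}$. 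Consequently the assignment $m\mapsto V_O$, for an arbitrary chart image $O\ni m$, is a well-defined map $\Phi$ on $\mathcal M$, and it is constant on every chart image, hence locally constant.

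Finally, since $\mathcal M$ is connected, a locally constant map on $\mathcal M$ is constant, so $\Phi\equiv V$ for one fixed $d$-dimensional affine space $V\subseteq C(\mathbb R_+,\mathbb R)$. Every $m\in\mathcal M$ lies in some chart image $O$, and then $\Psi(m)\in V_O=\Phi(m)=V$; thus $\Psi(\mathcal M)\subseteq V$, which is the claim.

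I do not expect a deep difficulty here, as \Cref{l:locally affine} does the substantive work. The one place where care is genuinely needed is the bookkeeping in the second paragraph: confirming that the affine space produced in \Cref{l:locally affine} is precisely the affine hull of the image (hence chart-independent), handling the various open subsets of $\mathcal M$ correctly, and verifying that the regularity hypotheses of \Cref{l:g is C12}/\Cref{l:locally affine} hold on every chart — which they do by \Cref{l:M in dx}.
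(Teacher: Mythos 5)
Your argument is essentially the paper's own proof: invoke \Cref{l:locally affine} on each chart, show the resulting $d$-dimensional affine spaces agree on overlapping chart images (the paper does this by noting the overlap is a $d$-dimensional manifold contained in both, you via openness and affine hulls — the same point made slightly more explicit), and conclude by connectedness of $\mathcal M$. The proposal is correct and no gap is present.
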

\begin{proof}
	Let $U_1,U_2$ be open and connected and $\chi_j:U_j\rightarrow \mathcal M$ be $C^2$-functions with $D\chi_j$ full rank everywhere for $j=1,2$ such that $\chi_1(U_1)\cap \chi_2(U_2)\neq \emptyset$, i.e.\ $\chi_1,\chi_2$ are local maps with a common area. Define $G_j:U\rightarrow C(\mathbb R_+,\mathbb R)$. Note that $G_1(U_1)\cap G_2(U_2)$ is non-empty and a $d$-dimensional manifold.  
	\Cref{l:locally affine} yields affine spaces $V_1,V_2\subseteq C(\mathbb R_+,\mathbb R)$ of dimension $d$ each such that $G_j(U_j)\subseteq V_j$. We find that
	\begin{equation*}
		V_1 \supseteq \left( G_1(U_1)\cap G_2(U_2)\right) \subseteq V_2.
	\end{equation*}
	Consequently, $V_1=V_2$. Since $\mathcal M$ is connected we find that $\Psi(\mathcal M)\subseteq V_1$ as claimed.
\end{proof}
We are now prepared to provide the proof for our main result.
\begin{proof}[Proof of \Cref{thm:main_result}] \label{proof:main}
	\Cref{p:Affine space} yields a $d$-dimensional affine space $V\subseteq C(\mathbb R_+,\mathbb R)$ such that
	\begin{equation*}
		\Psi(\mathcal M) \subseteq V.
	\end{equation*}
	Consequently, we have
	\begin{equation*}
		\mathcal M\subseteq \Psi^{-1}(V).
	\end{equation*}
	Let $h_0,\dots,h_d\in C(\mathbb R_+,\mathbb R)$ such that
	\begin{equation*}
		V= h_0 + \Span\{h_1,\dots,h_d\}.
	\end{equation*}
	Let $I := \Psi(\mathcal M)$. Note that $U$ is connected and open in $V$. Let $U\subseteq \mathbb R^d$ be open and connected such that the basis representation
	\begin{equation*}
		T:\mathbb R^d\rightarrow V,z\mapsto h_0 + \sum_{j=1}^dz_jh_j 
	\end{equation*}
	satisfies $T(U) = I$. We find that
	\begin{equation*}
		\mathcal M = \Psi^{-1}(I) = \Psi^{-1}(T(U)).
	\end{equation*}
	The claim follows.
\end{proof}
\begin{proof}[Proof of \Cref{thm:discount_invariant}]
    Let $\mathcal H^0$ and $\mathcal M\subseteq\mathcal H^0$ be defined as in the statement.
    \begin{itemize}
		\item[i) $\Rightarrow$ ii):] \Cref{thm:main_result} yields the asserted manifold $\mathcal M$ for some $C^2$ functions $c:\R_+\rightarrow\R$ and $u:\R _+\rightarrow\R$. Using the results of \cite[Proposition 3.2]{filipovic_discount} we obtain a matrix $M\in\R^{(d+1)\times (d+1)}$ such that $c$ and $u$ assume the asserted form.
		\item[ii) $\Rightarrow$ i):] Let $\mathcal B$ and $\Lambda _{\text{HJMM}}$ be as in \Cref{def:hjmm_invariant} and $f_0\in M$, $(\beta ,\Sigma )\in\Lambda _{\text{HJMM}}$. \Cref{lem:sol_existence} yields that there is an $\mathcal H$-valued solution $f$ to the HJMM-equation $(\beta,\Sigma)$ with starting value $f_0$. We need to show that $f$ is $\mathcal M$-valued.

        
        Consider the map $\Psi$ from \Cref{eq:psi} and define $V=\Psi (\mathcal M)$, as well as $h_t:=\Psi f_t$ for all $t\geq 0$. Define 
        \begin{equation*}
                \mathcal B^h =\set{\Sigma ^h:\mapsto (1-h)\J_{\cdot}\Sigma (\Psi ^{-1}h)\given\Sigma\in\mathcal B}
        \end{equation*}
        and consider the parameter set 
        \begin{equation*}
            \Lambda ^h:=\set{ (\beta ^h, \Sigma ^h)\given \beta ^h(h):=(h-1)\delta _0(\Psi ^{-1}h), \Sigma ^h\in\mathcal B^h}.
        \end{equation*}
        Due to \Cref{lem:h_dynamics}, if $f$ is a solution to the HJMM-equation with $(\beta ,\Sigma )\in\Lambda _{\text{HJMM}}$, then $h$ is a solution to the SPDE
        \begin{equation}\label{eq:h_dynamics_discount_invariant}
            dh_t = (\partial _xh_t+\beta^h(h_t))dt + \Sigma^h(h_t)dW_t
        \end{equation}
        with $(\beta ^h, \Sigma ^h)\in\Lambda ^h$. Indeed, since $\Psi$ is injective, in order to show that $\mathcal M$ is fully invariant under the HJMM-equation, it is sufficient to show that $V$ is fully invariant under \Cref{eq:h_dynamics_discount_invariant} for $\Lambda ^h$. We may therefore verify the conditions of \Cref{rmk:invariant_manifold} for the manifold $V$ to prove the assertion. Using the results of \Cref{p:Affine space}, we see that 
			\begin{equation*}
				V=\set[\bigg]{c+\sum _{i=1}^du_iz_i\given z\in U}.
			\end{equation*}
		Since $c$ and $u$ are real analytic functions, this easily implies $V\subseteq\text{dom}(\partial _x)$, that is \ref{rmk:invariant_manifold21}. Now, consider the pair $(\beta ^h, \Sigma ^h)\in\Lambda ^h$. Since $\J _{\cdot}$ is a linear operator, $\Sigma ^h(h)\in L(\R^d,T_h)$ for any $h\in V$, which is directly \ref{rmk:invariant_manifold22} of. To verify the third condition, we observe that $V$ is an affine manifold. Due to \cite[Theorem 4]{filipovic_invariant}, this reduces Condition \ref{rmk:invariant_manifold23} to $\partial _xh+\beta ^h(h)\in T_h$ for any $h\in T_h$. To show that this is true, define $r:=\delta _0(\Psi ^{-1}h)$. Due to \Cref{lem:psi_properties}, we find that $r=h'(0)$. Using the particular form of the functions $c$ and $u$, we therefore have $r=\langle -Me_1,(1,z)\rangle$ for $z\in U$. We may now compute
		\begin{equation*}
			\begin{aligned}
				&\partial _xh+\beta ^h(h)=\partial _xh+(h-1)r\\
							&=\partial _x\langle (\mathbbm 1_{d+1}-e^{\cdot M})e_1,(1,z)\rangle +r\left(\langle (\mathbbm 1_{d+1}-e^{\cdot M})e_1,(1,z)\rangle -1\right)\\
							&=-\langle Me^{\cdot M}e_1,(1,z)\rangle -r\langle e^{\cdot M}e_1,(1,z)\rangle\\
							&=-\langle (M-r\mathbbm 1_{d+1})e^{\cdot M}e_1,(1,z)\rangle = -\langle e^{\cdot M}e_1,(M^{\top}+\mathbbm{1}_{d+1})(1,z)\rangle\\
							&=-\langle e^{\cdot M} e_1, M^{\top}(1,z)-\langle e_1,M^{\top}(1,z)\rangle (1,z)\rangle
			\end{aligned}
		\end{equation*}
		Observe now that the vector $v:=M^{\top}(1,z)-\langle e_1, M^{\top}(1,z)\rangle(q,z)\rangle$ satisfies $v_1=M_{1,1}+\sum _{i=1}^dM_{i+1,1}z_i-(M_{1,1}-\sum _{i=1}^dM_{i+1,1}z_i)=0$. This implies $\partial _xh+\beta^h(h)\in\text{span}\{u_1,\dots,u_d\}$. Since $V$ is an affine manifold, its tangent space satisfies $T_h=V_0$, where $V_0=\text{span}\{u_1,\dots,u_d\}$, the vector space obtained by shifting the affine space to the origin. This implies that Condition \ref{rmk:invariant_manifold23} is satisfied and therefore proves the assertion.

		\end{itemize}
\end{proof}
\appendix
\section{Technical tools}\label{sect:tech_tools}
\begin{lemma}\label{l:C Embedding}
	We have $\mathcal{H}\subseteq C(\mathbb R_+,\mathbb R)$. Moreover, convergence in $\mathcal{H}$ implies local uniform convergence.

\end{lemma}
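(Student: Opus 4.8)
The plan is to exploit that, under \ref{H1} and \ref{H2}, every point evaluation on $\mathcal H$ is a bounded linear functional. As already observed in the Remark preceding \Cref{d:HJMM-equation}, for each $x\ge 0$ the map $\delta_x=\delta_0\circ\Sg_x:\mathcal H\to\mathbb R$ is continuous linear, being the composition of the bounded operator $\Sg_x$ (a member of the $c_0$-semigroup from \ref{H2}) with the bounded functional $\delta_0$. In particular $\mathcal H$ is a reproducing kernel Hilbert space, and one has the pointwise bound $|h(x)|=|\delta_x h|\le \|\delta_0\|\,\|\Sg_x\|\,\|h\|_{\mathcal H}$ for every $h\in\mathcal H$ and $x\ge 0$, where $\|\Sg_x\|$ denotes the operator norm.

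First I would show $\mathcal H\subseteq C(\mathbb R_+,\mathbb R)$. Fix $h\in\mathcal H$; since $h(x)=\delta_0(\Sg_x h)$, it suffices to note that $x\mapsto \Sg_x h$ is continuous from $[0,\infty)$ into $\mathcal H$ and then compose with the continuous functional $\delta_0$. The continuity of $x\mapsto\Sg_x h$ is a standard consequence of \ref{H2}: strong continuity at $0$ together with the semigroup law gives, for $s\ge 0$ and small $\varepsilon>0$, $\Sg_{s+\varepsilon}h-\Sg_s h=\Sg_s(\Sg_\varepsilon h-h)\to 0$ and, for $s>0$, $\Sg_s h-\Sg_{s-\varepsilon}h=\Sg_{s-\varepsilon}(\Sg_\varepsilon h-h)\to 0$, using that $\sup_{0\le r\le s}\|\Sg_r\|<\infty$. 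Hence $x\mapsto h(x)$ is continuous.

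For the convergence statement, suppose $h_n\to h$ in $\mathcal H$ and fix $T>0$. A $c_0$-semigroup is uniformly bounded on compact time intervals, so $M_T:=\sup_{0\le x\le T}\|\Sg_x\|<\infty$. The pointwise bound from the first paragraph then yields
\begin{equation*}
	\sup_{0\le x\le T}|h_n(x)-h(x)| \;\le\; \|\delta_0\|\,M_T\,\|h_n-h\|_{\mathcal H} \;\longrightarrow\; 0 \qquad (n\to\infty),
\end{equation*}
which is precisely local uniform convergence. I do not expect a genuine obstacle here; the only points requiring a little care are the two standard semigroup facts — continuity of $x\mapsto\Sg_x h$ on $[0,\infty)$ and local boundedness of $x\mapsto\|\Sg_x\|$ — for which \cite{EK} or \cite{engel} can be cited.
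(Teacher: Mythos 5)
Your proof is correct and follows essentially the same route as the paper: writing $h(x)=\delta_0(\Sg_x h)$, using strong continuity of the semigroup for membership in $C(\mathbb R_+,\mathbb R)$, and bounding $\sup_{x\in[0,T]}|h_n(x)-h(x)|$ by $\|\delta_0\|\sup_{x\in[0,T]}\|\Sg_x\|\,\|h_n-h\|_{\mathcal H}$ with the local boundedness of $\|\Sg_x\|$ cited from the semigroup literature. The only difference is that you spell out the continuity of $x\mapsto\Sg_x h$ via the semigroup law, which the paper takes as standard.
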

\begin{proof}
	Observe that $f(x) = \delta_0\Sg_xf$ for $x\geq 0$. Since $\Sg$ is strongly continuous we find that $x\mapsto \Sg_xf$ is continuous and, hence, $\mathcal{H}\subseteq C(\mathbb R_+,\mathbb R)$. Moreover, let $(f_n)_{n\in\mathbb N}$ be a convergent sequence in $\mathcal{H}$ with limit $f_\infty\in \mathcal{H}$. We have for $x_1\geq 0$
	\begin{equation*}
		\sup_{x\in[0,x_1]}|f_n(x)-f_\infty(x)| \leq \sup_{x\in[0,x_1]}\|\Sg_x\|\|\delta_0\||f_n-f_\infty| \rightarrow 0
	\end{equation*}
	as $n\rightarrow\infty$ because \cite[Ch.\ 1, Proposition 1.2]{EK} yields that $\sup_{x\in[0,x_1]}\|\Sg_x\| < \infty$.
\end{proof}

\begin{lemma}\label{lem:j_functional}
	Let $x\geq 0$. The linear functional $\J_x:\mathcal H\rightarrow\R$ given by
	\begin{equation*}
		\J_x(f):= \int_0^x f(s)ds
	\end{equation*}
	is defined for all $f\in \mathcal{H}$ and is continuous and linear. Moreover, the map $x\mapsto \J_x(f)$ is in $C^1(\mathbb R_+,\mathbb R)$ and
	\begin{equation*}
		\J:\mathbb R_+\times \mathcal H\rightarrow \mathbb R,\quad (x,f)\mapsto \J_x(f)
	\end{equation*}
	is continuous.
\end{lemma}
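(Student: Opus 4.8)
The plan is to prove everything by elementary estimates, relying only on the embedding $\mathcal H\subseteq C(\mathbb R_+,\mathbb R)$ from \Cref{l:C Embedding} and on the pointwise bound for the evaluation functionals $\delta_x=\delta_0\Sg_x$; no deep input is required.

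First I would settle well-definedness and the $C^1$-statement. Since every $f\in\mathcal H$ is a continuous function on $\mathbb R_+$ by \Cref{l:C Embedding}, the Riemann integral $\J_x(f)=\int_0^xf(s)\,ds$ exists and is linear in $f$; by the fundamental theorem of calculus the map $x\mapsto \J_x(f)$ is differentiable with derivative $f(x)$, which is continuous, so $x\mapsto\J_x(f)\in C^1(\mathbb R_+,\mathbb R)$.

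Next I would show that the functional $\J_x$ is bounded on $\mathcal H$. Writing $|f(s)|=|\delta_0\Sg_sf|\le\|\delta_0\|\,\|\Sg_s\|\,|f|$ and setting $C_x:=\sup_{s\in[0,x]}\|\Sg_s\|$, which is finite by \cite[Ch.\ 1, Proposition 1.2]{EK} (exactly as in the proof of \Cref{l:C Embedding}), one obtains $|\J_x(f)|\le x\,\|\delta_0\|\,C_x\,|f|$, so $\J_x$ is bounded, hence continuous and linear. For the joint continuity of $\J$ I would argue sequentially: given $(x_n,f_n)\to(x_\infty,f_\infty)$ in $\mathbb R_+\times\mathcal H$, fix $M\ge\sup_n x_n$ and split
\begin{equation*}
  \J_{x_n}(f_n)-\J_{x_\infty}(f_\infty)=\J_{x_n}(f_n-f_\infty)+\bigl(\J_{x_n}(f_\infty)-\J_{x_\infty}(f_\infty)\bigr).
\end{equation*}
The first term is bounded by $M\,\|\delta_0\|\,C_M\,|f_n-f_\infty|\to0$ by the previous estimate, and the second term equals $\int_{x_\infty}^{x_n}f_\infty(s)\,ds$, which is bounded in absolute value by $|x_n-x_\infty|\sup_{s\in[0,M]}|f_\infty(s)|\to0$ since $f_\infty$ is continuous, hence bounded on $[0,M]$. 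As $\mathbb R_+\times\mathcal H$ is metrizable, sequential continuity yields continuity of $\J$.

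The only point needing care — rather than a genuine obstacle — is the local uniform bound $C_x=\sup_{s\in[0,x]}\|\Sg_s\|<\infty$, i.e.\ the local boundedness of the $c_0$-semigroup $(\Sg_s)_{s\ge0}$, which is precisely the fact already invoked in \Cref{l:C Embedding}.
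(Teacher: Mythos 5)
Your proof is correct and follows essentially the same route as the paper's: both rest on the embedding $\mathcal H\subseteq C(\mathbb R_+,\mathbb R)$ from \Cref{l:C Embedding} and the local boundedness of the semigroup, with the paper invoking the local uniform convergence statement directly where you unpack it into the explicit bound $|\J_x(f)|\le x\,\|\delta_0\|\,C_x\,|f|$ and a two-term splitting. The extra detail is harmless and arguably makes the continuity of the linear functional $\J_x$ more transparent than the paper's terse ``in particular''.
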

\begin{proof}
	\Cref{l:C Embedding} yields that $f\in \mathcal{H}$ is continuous. Hence, $\J_x(f)$ is well defined and $x\mapsto \J_x(f)$ is continuously differentiable.
%
	Furthermore, if $f_n\rightarrow f_\infty$ in $\mathcal{H}$ and $x_n\rightarrow x_\infty$ in $\mathbb R_+$, then $f_n\rightarrow f_\infty$ locally uniformly on $[0,\sup_{n\in\mathbb N}x_n]$ and, hence we find that
	\begin{equation*}
		\J(x_n,f_n)\rightarrow \J(x_\infty,f_\infty).
	\end{equation*}
	In particular, $\J_x$ is continuous and linear.
\end{proof}

We now provide the classical result that the generator $\partial_x$ of $\Sg$ is the derivative operator. A formal proof  is given for completeness.
\begin{lemma}\label{l:domain Generator}
	We have
	\begin{equation*}
		\dom(\partial_x) = \{f\in \mathcal{H}\cap C^1(\mathbb R_+,\mathbb R): f'\in \mathcal{H}\}
	\end{equation*}
	and $\partial_xf=f'$ for $f\in\dom(\partial_x)$, where $f'$ denotes the derivative of the function $f$.
\end{lemma}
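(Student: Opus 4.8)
The plan is to prove the two inclusions separately, using in both directions the integrated form of the semigroup--generator relation together with the embedding $\mathcal H \hookrightarrow C(\mathbb R_+,\mathbb R)$ from \Cref{l:C Embedding}.

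For the inclusion ``$\supseteq$'', let $f\in\mathcal H\cap C^1(\mathbb R_+,\mathbb R)$ with $f'\in\mathcal H$. Since $\Sg$ is strongly continuous by Assumption \ref{H2}, the map $s\mapsto\Sg_sf'$ is continuous from $[0,h]$ into $\mathcal H$, hence Bochner integrable, so $\tfrac1h\int_0^h\Sg_sf'\,ds\in\mathcal H$. Applying the continuous linear functional $\delta_x=\delta_0\Sg_x$ (continuous by \ref{H1}, \ref{H2}), which commutes with the Bochner integral, and using the pointwise fundamental theorem of calculus for the $C^1$-function $f$, I obtain
\[
  \Big(\tfrac1h\int_0^h\Sg_sf'\,ds\Big)(x)=\tfrac1h\int_0^h f'(x+s)\,ds=\frac{f(x+h)-f(x)}{h}=\Big(\tfrac{\Sg_hf-f}{h}\Big)(x)
\]
for every $x\geq 0$; since elements of $\mathcal H$ are functions, this is an identity in $\mathcal H$. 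Letting $h\searrow0$ and using right-continuity of $s\mapsto\Sg_sf'$ at $0$ gives $\tfrac1h\int_0^h\Sg_sf'\,ds\to f'$ in $\mathcal H$, so $f\in\dom(\partial_x)$ and $\partial_xf=f'$.

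For the inclusion ``$\subseteq$'', let $f\in\dom(\partial_x)$ and set $g:=\partial_xf\in\mathcal H$. The standard identity $\Sg_tf-f=\int_0^t\Sg_sg\,ds$ for $c_0$-semigroups (see \cite{EK}), which holds in $\mathcal H$, together with the same commutation of $\delta_x$ with the Bochner integral, yields $f(x+t)-f(x)=\int_0^t g(x+s)\,ds$ for all $x,t\geq0$. Taking $x=0$ gives $f(t)=f(0)+\int_0^t g(s)\,ds$, and since $g$ is continuous by \Cref{l:C Embedding} this forces $f\in C^1(\mathbb R_+,\mathbb R)$ with $f'=g=\partial_xf\in\mathcal H$. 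Combining the two inclusions proves the claim.

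The only mildly delicate points are that $\delta_x$ may be pulled inside the Bochner integrals --- which is immediate, since $\delta_x$ is continuous linear --- and the invocation of the integrated generator identity $\Sg_tf-f=\int_0^t\Sg_s\partial_xf\,ds$, which is entirely classical; everything else is just the pointwise fundamental theorem of calculus, so I do not expect any substantial obstacle.
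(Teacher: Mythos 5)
Your proof is correct. The inclusion $\supseteq$ is essentially identical to the paper's argument: both form the Bochner integral $\int_0^h \Sg_s f'\,ds$, identify it with $\Sg_hf-f$ by applying the point evaluations $\delta_x$ and the pointwise fundamental theorem of calculus, and conclude via the strong continuity of $\Sg$ at $0$. For the inclusion $\subseteq$ you take a genuinely different route: you invoke the classical integrated identity $\Sg_tf-f=\int_0^t\Sg_s(\partial_xf)\,ds$ for $c_0$-semigroups, push $\delta_x$ inside, and read off $f(t)=f(0)+\int_0^t(\partial_xf)(s)\,ds$ with a continuous integrand, which gives $f\in C^1$ and $f'=\partial_xf$ in one stroke. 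The paper instead works directly with difference quotients: it first identifies $(\partial_xf)(x)$ as the right-derivative $f'(x+)$ by applying $\delta_x$ to the defining limit, and then separately shows the left-derivative agrees, using that $\frac{\Sg_hf-f}{h}\to\partial_xf$ in $\mathcal H$ implies locally uniform convergence (\Cref{l:C Embedding}), so that $\delta_{x-h}$ applied to the quotient converges to $\delta_x(\partial_xf)$. Your version is shorter and leans on a standard semigroup fact; the paper's version is self-contained modulo the embedding lemma and avoids citing the integral representation. Both are complete; the only point to make explicit in your write-up is that equality of $\delta_x$ applied to both sides for all $x\geq 0$ implies equality in $\mathcal H$ because $\mathcal H$ is a space of genuine functions on $\mathbb R_+$, which you do note.
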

\begin{proof}
	Let $f\in\dom(\partial_x)$. Then $\partial_xf = \lim_{h\searrow 0}\frac{\Sg_hf-f}{h}$. We find that
	\begin{equation*}
		(\partial_xf)(x) = \delta_x(\partial_xf) = \lim_{h\searrow0}\frac{f(x+h)-f(x)}{h} = f'(x+)
	\end{equation*}
	where $f'(\cdot+)$ denotes the right-derivative of $f$. Since $f'(\cdot+)\in \mathcal{H}\subseteq C(\mathbb R_+,\mathbb R)$ we find for $x>h>0$ that
	\begin{equation*}
		\frac{f(x-h)-f(x)}{-h} = \delta_{x-h}\left(\frac{\Sg_{h}f-f}{h}\right)\rightarrow \delta_x\left(\partial_xf\right) = f'(x+)
	\end{equation*}
	as $h\searrow 0$ because $\frac{\Sg_{h}f-f}{h}\rightarrow f'(\cdot+)$ locally uniformly by \Cref{l:C Embedding}. Consequently, the left-derivative equals the right-derivative and $f'(\cdot+)=f'\in \mathcal{H}\subseteq C(\mathbb R_+,\mathbb R)$. Hence,
	\begin{equation*}
		f\in \mathcal{H}\cap C^1(\mathbb R_+,\mathbb R)\text{  with  }f'\in \mathcal{H}.
	\end{equation*}
	Now, let $ f\in \mathcal{H}\cap C^1(\mathbb R_+,\mathbb R)$ with $f'\in \mathcal{H}$ and we show that $f\in\dom(\partial_x)$. We have $h\mapsto \Sg_h f'$ is continuous, hence, locally integrable. Thus, $\int_0^h \Sg_uf' du<\infty$ and we have
	\begin{equation*}
		\delta_x\left(\int_0^h \Sg_uf' du\right) = \int_0^h f'(u+x) du = f(x+h)-f(x)=\delta_x(\Sg_hf-f),
	\end{equation*}
	which shows that $\int_0^h \Sg_uf' du = \Sg_hf-f$. The fundamental theorem of calculus yields
	\begin{equation*}
		f' = \lim_{h\searrow0}\frac1h \int_0^h \Sg_uf' du  = \lim_{h\searrow0}\frac{\Sg_hf-f}{h},
	\end{equation*}
	which shows that $f\in\dom(\partial_x)$ and $\partial_xf=f'$.
\end{proof}

\begin{lemma}\label{lem:group_action}
	Let $V\subseteq\mathfrak F(\R_+,\R)$ be a finite-dimensional vector space of functions. Let $g\in V$ be such that $g^n\in V$ for any $n\in\mathbb N$. Then $g$ is constant
\end{lemma}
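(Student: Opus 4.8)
The plan is to combine a dimension count with the continuity of the functions in $V$. Set $d:=\dim V$; if $V=\{0\}$ there is nothing to prove, so assume $d\ge 1$. The key observation is that the $d+1$ functions $g,g^2,\dots,g^{d+1}$ all belong to the $d$-dimensional space $V$ by hypothesis, hence are linearly dependent: there are reals $a_1,\dots,a_{d+1}$, not all zero, with $\sum_{k=1}^{d+1}a_kg^k=0$ in $\mathfrak F(\R_+,\R)$, i.e.\ $\sum_{k=1}^{d+1}a_kg(x)^k=0$ for every $x\ge 0$. Writing $p(t):=\sum_{k=1}^{d+1}a_kt^k$, this says $p(g(x))=0$ for all $x$, and $p$ is a nonzero polynomial of degree at most $d+1$, so it has only finitely many roots. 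Consequently the image $g(\R_+)$ is contained in the finite set $\{t\in\R:p(t)=0\}$, i.e.\ $g$ attains only finitely many distinct values.

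It then remains to upgrade "finitely many values" to "constant". For this I would invoke that $g$ is continuous on $\R_+$ together with the connectedness of $\R_+$: the image of a connected set under a continuous map is connected, i.e.\ an interval of $\R$, and an interval containing only finitely many points is a singleton; hence $g$ is constant. Continuity of $g$ is available in the situations where the lemma is applied, since there $V$ is a subspace of $C(\R_+,\R)$ (recall also $\mathcal H\subseteq C(\R_+,\R)$ by \Cref{l:C Embedding}).

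The dimension-count step is routine linear algebra. The hard part will be the passage from finitely many values to constancy: finite-dimensionality alone does not suffice for the conclusion — on the span of the indicators of two disjoint non-empty subsets of $\R_+$ one can exhibit a two-valued $g$ all of whose powers lie in that two-dimensional space — so the essential ingredient beyond the polynomial identity is exactly the continuity of the elements of $V$, and it is this that is being exploited.
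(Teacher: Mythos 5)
Your proof is correct where the paper's hypotheses actually suffice, and it differs from the paper's in an instructive way. For the first half you use the bare dimension count: $g,g^2,\dots,g^{d+1}$ are linearly dependent, giving a nonzero polynomial $p$ with $p(g(x))=0$ for all $x$, hence $g(\R_+)$ lies in the finite root set of $p$. The paper instead takes the minimal $N$ with $g^1,\dots,g^N$ dependent, observes that $\mathcal U:=\Span\{g^1,\dots,g^{N-1}\}$ is an algebra, and studies the multiplication operator $M_g:h\mapsto gh$ on $\mathcal U$ to conclude $\ran(g)\subseteq\mathrm{EV}(M_g)$, again a finite set; your polynomial-annihilation route reaches the same finiteness more cheaply and avoids the paper's implicit claim that the supports of the eigenvectors of $M_g$ cover $\R_+$ (which would need an argument if $M_g$ is not diagonalisable over $\R$). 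Where you genuinely add value is the second half: you correctly note that ``finitely many values'' does not yield ``constant'' for arbitrary elements of $\mathfrak F(\R_+,\R)$ --- your example with the span of the indicators of two disjoint non-empty sets shows the lemma is false as literally stated --- and that continuity of $g$ together with connectedness of $\R_+$ is the missing ingredient. The paper's proof in fact stops at $\ran(g)\subseteq\mathrm{EV}(M)$ and never performs this step. In the lemma's only application (\Cref{lem:transport_on_manifold}) one has $g=\exp(-\J_{\cdot}h_1)$ with $h_1\in\mathcal H\subseteq C(\R_+,\R)$ by \Cref{l:C Embedding}, so continuity is available there and your argument closes the gap; strictly speaking, though, the statement of the lemma should carry the additional hypothesis that $g$ (or $V$) is continuous. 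So your proposal is not only correct but repairs an imprecision in both the statement and the paper's proof.
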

\begin{proof}
	Since $V$ is finite-dimensional, there is a minimal $N\in\mathbb N$, so that $g^1,\dots ,g^N$ are linearly dependent. Therefore, there exist $c_1,\dots ,c_{N-1}\in\R$, so that
	\begin{equation*}
		g^N=\sum _{i=1}^{N-1}c_ig^i.
	\end{equation*}
	Let now $\mathcal U:=\text{span}\{g^1,\dots g^{N-1}\}$. Then $\mathcal U$ is an algebra of functions. Define the linear operator $M_g:\mathcal U\rightarrow\mathcal U$, $h\mapsto gh$. Let $h$ be an eigenvector of $M_g$. We find $M_gh=\lambda h$ for some $\lambda\in\mathbb C$. This implies
	\begin{equation*}
		0 = (M_g-\lambda )h = (g-\lambda )h.
	\end{equation*}
	Thus, $\{h\neq 0\}\subseteq \{g = \lambda\}$. Since this is true for any eigenvector $h$, we have $\mathrm{ran}(g)\subseteq\text{EV}(M)$, where $\text{EV}(M)$ denotes the set of eigenvalues of the operator $M$.
\end{proof}

\begin{lemma}\label{lem:transport_on_manifold}
	Let $\mathcal M$ be linear-rational in the sense of \Cref{def:linear_rational} and $h_0, h_1\in\mathcal H$. Assume that $f: \R _+\rightarrow\mathcal H$, $t\mapsto h_0+th_1$ is $\mathcal M$-valued. Then $h_1\equiv 0$.
\end{lemma}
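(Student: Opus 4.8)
The plan is to exploit the rational structure of $\mathcal M$ through an elementary logarithmic‑derivative computation, which converts the hypothesis into the statement that a single function together with all of its powers lies in a fixed finite‑dimensional space of functions; \Cref{lem:group_action} then delivers the conclusion.

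Let $U$, $c$ and $u=(u_1,\dots ,u_d)$ be as in \Cref{def:linear_rational}. Fix $t\ge 0$. Since $h_0+th_1\in\mathcal M$, there is $z(t)\in U$ with
\begin{equation*}
	h_0(x)+th_1(x)=\frac{c'(x)+\sum_{j=1}^d z_j(t)u_j'(x)}{1-c(x)-\sum_{j=1}^d z_j(t)u_j(x)},\qquad x\ge 0 .
\end{equation*}
Set $D_t(x):=1-c(x)-\sum_{j=1}^d z_j(t)u_j(x)$; by assumption $D_t$ is $C^1$ in $x$ and strictly positive on $\R_+$, and its $x$‑derivative equals $-\bigl(c'+\sum_j z_j(t)u_j'\bigr)$. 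Thus the displayed identity reads $\partial_x D_t=-(h_0+th_1)D_t$, a linear ODE in $x$ whose positive solution is
\begin{equation*}
	D_t(x)=D_t(0)\exp\Bigl(-\textstyle\int_0^x(h_0+th_1)(s)\,ds\Bigr)=D_t(0)\,a(x)\,e^{-tb(x)},
\end{equation*}
where $a:=\exp(-\J_\cdot h_0)$ and $b:=\J_\cdot h_1$; both are continuous by \Cref{lem:j_functional} and \Cref{l:C Embedding}, and $a>0$ everywhere.

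Dividing by $D_t(0)>0$ yields, for every $t\ge 0$,
\begin{equation*}
	a\,e^{-tb}=\frac{1}{D_t(0)}\Bigl(1-c-\sum_{j=1}^d z_j(t)u_j\Bigr)\ \in\ \widetilde V:=\Span\{\,1-c,\ u_1,\dots ,u_d\,\},
\end{equation*}
a space of functions of dimension at most $d+1$. (Equivalently, $a\,e^{-tb}=1-\Psi(h_0+th_1)$, so this says that the $\Psi$‑images of the line lie in a fixed finite‑dimensional affine set.) Put $g:=e^{-b}$. Since $a$ is nowhere zero, $a^{-1}\widetilde V$ is again a finite‑dimensional vector space of functions, and for every $n\in\mathbb N$ we have $g^{\,n}=a^{-1}\bigl(a\,e^{-nb}\bigr)\in a^{-1}\widetilde V$, while also $g=a^{-1}(a\,e^{-b})\in a^{-1}\widetilde V$. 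Hence \Cref{lem:group_action} applies and shows that $g$ is constant; evaluating at $x=0$ gives $g\equiv e^{-b(0)}=1$, so $b\equiv 0$, i.e.\ $\int_0^x h_1(s)\,ds=0$ for all $x\ge 0$. As $h_1$ is continuous, differentiation gives $h_1\equiv 0$.

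The only step needing care is the first one: for each $t$ one must extract a concrete representation of $h_0+th_1$ as an element of $\mathcal M$, observe that the denominator $D_t$ then solves an $x$‑ODE, and recognise that its solution — once normalised at $x=0$ — is confined, uniformly in $t$, to the single finite‑dimensional space $\widetilde V$ spanned by $1-c$ and the $u_j$. No uniqueness or regularity of the selection $t\mapsto z(t)$ is required, since only the memberships $a\,e^{-tb}\in\widetilde V$ for individual $t$ are used; the remainder is the routine fact, packaged in \Cref{lem:group_action}, that a (continuous) function all of whose powers lie in a fixed finite‑dimensional function space is constant.
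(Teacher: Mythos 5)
Your proposal is correct and follows essentially the same route as the paper: both reduce the hypothesis to the membership $\exp(-\J_\cdot h_0 - t\J_\cdot h_1)\in\Span\{1-c,u_1,\dots,u_d\}$ (you via the ODE for the denominator, the paper by integrating $f(t)$ directly), then multiply by $\exp(\J_\cdot h_0)$ and invoke \Cref{lem:group_action} on $g=\exp(-\J_\cdot h_1)$ to conclude $h_1\equiv 0$. Your explicit normalisation by $D_t(0)$ is a slightly more careful handling of the case $c(0)+\langle u(0),z\rangle\neq 0$, but the substance is identical.
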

\begin{proof}
	Since $f$ is continuous and with values in $\mathcal M$, we find $\gamma :[0,1]\rightarrow\R ^d$ such that
	\begin{equation*}
		f(t)(x)=\frac{c'(x)+\langle u'(x),\gamma (t)\rangle}{1-\left(c(x)+\langle u(x),\gamma (t)\rangle\right)}\qquad\text{for all }t\in [0,1], x\geq 0.
	\end{equation*}
	Define now $H_0(x):=\J _xh_0$, $H_1(x):=\J _xh_1$ and we find
	\begin{equation*}
		\exp\left( -H_0(x)-tH_1(x)\right) =\exp\left( -\int _0^xf(t)(s)ds\right) = 1-\left( c(x)+\langle u(x),\gamma (t)\rangle\right)\in V,
	\end{equation*}
	where $V$ is the vector space spanned by $\{ 1-c,u_1,\dots ,u_d\}$.

	Define $g(x):=\exp\left( -H_1(x)\right)$ and the finite dimensional vector space $\mathcal U:=\{\exp\left( H_0\right) v:v\in V\}$. We observe that $g^n=\exp\left( -nH_1\right)\in\mathcal U$ by construction. \Cref{lem:group_action} yields that $g$ is constant. Consequently, $H_1$ is constant and therefore $h_1\equiv 0$ as claimed.
\end{proof}
\begin{proposition}\label{p:tcdf}
	Let $U\subseteq\mathbb R^d$ be open. Let $\eta_{i,j}:U\rightarrow \mathbb R^d$ be measurable and locally bounded with $\eta_{i,j}=\eta_{j,i}$ for any $i,j=1,\dots,d$.
	
	Then there is a twice continuously differentiable function $A:U\rightarrow \mathbb R^d$ such that for any twice continuously differentiable function $g:\mathbb R^d\rightarrow \mathbb R$ with
	\begin{equation*}
		\partial_i\partial_jg(y) = \langle\nabla g(y), \eta_{i,j}(y)\rangle,\quad i,j=1,\dots,d, y\in U, 
	\end{equation*}
	one has 
	\begin{equation*}
		g(y) = g(0) + \langle\nabla g(0), A(y)\rangle,\quad y\in\mathbb R^d. 
	\end{equation*}
\end{proposition}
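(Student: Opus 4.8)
The plan is to read the hypothesis as a closed, linear first‑order system for the \emph{gradient} of $g$, use it to prove a uniqueness statement, and then build $A$ not by integrating that system but directly out of a finite spanning set of solutions; this last point is what keeps $A$ of class $C^2$ even though $\eta_{i,j}$ is merely measurable. The one genuinely delicate issue is precisely this regularity of $A$: the obvious route — integrate the gradient system along paths to express $\nabla g$, hence $g$, in terms of $(g(0),\nabla g(0))$ — only yields an absolutely continuous object, since $\eta_{i,j}$ need not be continuous. My workaround is to keep the path argument only where boundedness suffices (a Gr\"onwall uniqueness estimate) and to obtain $C^2$‑smoothness for free from the solutions themselves. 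Throughout I will assume, as in the application in \Cref{l:locally affine}, that $U$ is connected and $0\in U$, and I will prove the identity for $y\in U$.

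\textbf{Step 1 (the solution space).} Let $\mathcal S$ denote the set of $g\in C^2(U,\mathbb R)$ with $\partial_i\partial_j g=\langle\nabla g,\eta_{i,j}\rangle$ on $U$ for all $i,j$. Since $\partial_i\partial_j(ag_1+bg_2)=\langle\nabla(ag_1+bg_2),\eta_{i,j}\rangle$, the set $\mathcal S$ is a real vector space, and it contains the constant functions. The key claim is that the linear evaluation map $\iota:\mathcal S\to\mathbb R^{d+1}$, $g\mapsto(g(0),\nabla g(0))$, is injective; granting this, $\mathcal S$ is finite‑dimensional with $\dim\mathcal S\le d+1$.

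\textbf{Step 2 (injectivity of $\iota$ — the crux).} Suppose $g\in\mathcal S$ with $g(0)=0$ and $\nabla g(0)=0$; I claim $g\equiv 0$. Fix $y\in U$ and a polygonal path $\gamma:[0,1]\to U$ from $0$ to $y$ (available since $U$ is open and connected), and set $w(s):=\nabla g(\gamma(s))\in\mathbb R^d$. Since $g\in C^2$, the function $w$ is piecewise $C^1$, and the chain rule together with $\partial_i\partial_j g=\langle\nabla g,\eta_{i,j}\rangle$ gives
\[
  \dot w_j(s)=\sum_{i=1}^d\dot\gamma_i(s)\,\langle w(s),\eta_{i,j}(\gamma(s))\rangle,\qquad j=1,\dots,d .
\]
This is a linear ODE $\dot w=wC(\cdot)$ whose coefficient matrix is bounded and measurable on $[0,1]$ — precisely because $\eta_{i,j}$ is measurable and locally bounded and $\gamma([0,1])$ is a compact subset of $U$. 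Since $w(0)=\nabla g(0)=0$, Gr\"onwall's inequality applied to $|w|$ forces $w\equiv 0$; hence $\tfrac{d}{ds}g(\gamma(s))=\langle w(s),\dot\gamma(s)\rangle=0$ and $g(y)=g(0)=0$. As $y\in U$ was arbitrary, $g\equiv 0$, so $\iota$ is injective.

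\textbf{Step 3 (building $A$).} On $\mathcal S$ the map $g\mapsto\nabla g(0)$ has kernel exactly the constants, so $V:=\{\nabla g(0):g\in\mathcal S\}$ has dimension $k:=\dim\mathcal S-1\le d$. Choose $g_1,\dots,g_k\in\mathcal S$ with $\nabla g_1(0),\dots,\nabla g_k(0)$ a basis of $V$ and, subtracting constants, arrange $g_i(0)=0$. Completing this basis of $V$ to a basis of $\mathbb R^d$ and passing to the dual basis yields vectors $\ell_1,\dots,\ell_k\in\mathbb R^d$ with $v=\sum_{i=1}^k\langle\ell_i,v\rangle\,\nabla g_i(0)$ for all $v\in V$. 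Given any $g\in\mathcal S$, the function $g-g(0)-\sum_{i=1}^k\langle\ell_i,\nabla g(0)\rangle g_i$ lies in $\mathcal S$ and vanishes together with its gradient at $0$, hence is identically zero on $U$ by Step 2. Therefore
\[
  g(y)=g(0)+\Big\langle\nabla g(0),\ \sum_{i=1}^k g_i(y)\,\ell_i\Big\rangle,\qquad y\in U,
\]
and $A:=\sum_{i=1}^k g_i(\cdot)\,\ell_i$ is the desired map: $A\in C^2(U,\mathbb R^d)$ because each $g_i\in C^2$, and $A(0)=0$ because $g_i(0)=0$. Note that the merely measurable datum $\eta_{i,j}$ enters only through the Gr\"onwall estimate of Step 2, never through $A$ itself, which is why the construction produces a $C^2$ function.
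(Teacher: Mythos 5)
Your proof is correct and follows essentially the same route as the paper's: both identify the solution set as a vector space whose elements are determined by $(g(0),\nabla g(0))$ via a Gr\"onwall uniqueness argument, and then assemble $A$ from a basis of solutions so that its $C^2$ regularity is inherited from the solutions rather than from $\eta$. The only differences are cosmetic — you use polygonal paths and a dual basis where the paper integrates along straight segments and uses an orthonormal system with an orthogonal transformation — and your path argument is, if anything, slightly more careful about the connectivity of $U$.
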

\begin{proof}
	Without loss of generality we may assume that $0\in U$.

	Define 
	\begin{equation*}
		\begin{aligned}
			\mathcal S &:= \{ f\in C^2(U,\mathbb R): \partial_i\partial_j f(y) = \langle\nabla f(y), \eta_{i,j}(y)\rangle,\text{ for any } i,j=1,\dots,d, y\in U\}, \\
    			\mathcal S_0 &:= \{f\in \mathcal S: f(0)=0\}.
		\end{aligned}
	\end{equation*}
	Note that $\mathcal S$, $\mathcal S_0$ are vector spaces and for $f\in \mathcal S$ one has $h:=f-f(0)\in\mathcal S_0$ and
	\begin{equation*}
		f = f(0) + h,
	\end{equation*}
	i.e. $\mathcal S$ is the direct sum $\mathcal F_0 \oplus \mathcal S_0$ where $\mathcal F_0$ is the space of constant functions. 
	
	Let $h \in \mathcal S_0$ with $\nabla h(0) = 0$. We show that $h=0$. To this end let $R>0$. By assumption on $\eta$ there is a constant $C\geq 0$ such that $\eta_{i,j}$ is bounded on the ball with radius $R$ by $C$  for any $i,j=1,\dots,d$. We find for $x\in\mathbb R^d\setminus \{0\}$ with $|x|\leq R$ that
	\begin{equation*}
		|\nabla h(x)| = \left|\int_0^1 D(\nabla h)(tx) x dt\right| \leq Cd^2 \int_0^1 \left|\nabla h(tx)\right||x| dt. 
	\end{equation*}
	
	Thus, Gr\"onwall's lemma yields that
	\begin{equation*}
		|\nabla h(x)| \leq |\nabla h(0)| \exp( |x| Cd^2) = 0,
	\end{equation*}
	for any $x\in\mathbb R^d$ with $|x|\leq R$. Consequently, $\nabla h = 0$ which yields that $h$ is constant. Since $h(0)=0$ we find that $h=0$.
	
	Define $\Theta:\mathcal S_0\rightarrow\mathbb R^d, f\mapsto \nabla f(0)$. By the above we have that $\Theta$ is an injective linear map. Consequently, $l:= \dim(\mathcal S_0)\leq d$. Let $f_1,\dots,f_l$ be maximal linear independent in $\mathcal S_0$ such that $b^j:= \Theta(f_j)$, $j=1,\dots,l$ is an orthonormal system with respect to the standard scalar product. There is an orthogonal transformation $T:\mathbb R^d\rightarrow \mathbb R^d$ with $Tb^j = e_j$ for any $j=1,\dots, l$. We define
	\begin{equation*}
		\begin{aligned}
			f(y) :=& (f_1(y),\dots, f_l(y),0,\dots, 0), \\
			A(y) :=& T^\top (f(y)),
		\end{aligned}
	\end{equation*}
	for any $y\in\mathbb R^d$. Note that $A:\mathbb R^d\rightarrow\mathbb R^d$ is twice continuously differentiable and $A(0)=0$.
	
	Moreover, we have for $j=1,\dots,l$ and $y\in\mathbb R^d$ that
	\begin{equation*}
		f_j(0) + \langle\nabla f_j(0), A(y)\rangle = \langle b^j, T^\top (f(y))\rangle = \langle T(b^j), f(y)\rangle = \langle e_j, f(y)\rangle = f_j(y). 
	\end{equation*}
	By linearity we find that 
	\begin{equation*}
		h(0) + \langle\nabla h(0),A(y)\rangle = h(y), 
	\end{equation*}
	for any $y\in U$, $h\in\mathcal S_0$. The claim follows.
\end{proof}
\newpage
\printbibliography

@article{autoencoder,
author = {Lyashenko, Andrei and Mercurio, Fabio and Sokol, Alexander},
title = {{Machine Learning for Interest Rates: Using Auto-Encoders for the Risk-Neutral Modeling of Yield Curves}},
day = {25},
year = {2024},
URL = {https://ssrn.com/abstract=4967989 or http://dx.doi.org/10.2139/ssrn.4967989}
}

@book{engel,
    title ={{One-Parameter Semigroups for Linear Evolution Equations}},
    author = {Engel, Klaus-Jochen and Nagel, Rainer},
    series = {Graduate Texts in Mathematics},
    doi = {https://doi.org/10.1007/b97696},
    publisher = {Springer New York, NY},
    year = {2000}

}

@book{EK,
  title={{Markov Processes: Characterization and Convergence}},
  author={Ethier, Stewart N. and Kurtz, Thomas G.},
  isbn={9780470317327},
  lccn={85012078},
  series={Wiley Series in Probability and Statistics},
  url={https://books.google.at/books?id=zvE9RFouKoMC},
  year={2009},
  publisher={Wiley}
}

@article{jarrow_wu,
author = {Wu, David and Jarrow, Robert},
title = {{Fitting Dynamically Consistent Forward Rate Curves: Algorithm and Comparison}},
journal = {International Journal of Theoretical and Applied Finance},
volume = {},
number = {},
pages = {},
year = {2024},
doi = {10.1142/S0219024924500213},
URL = {    
        https://doi.org/10.1142/S0219024924500213
}
}

@Inbook{bjoerk,
author="Bj{\"o}rk, Tomas",
editor="Carmona, Ren{\'e} A.
and {\c{C}}inlar, Erhan
and Ekeland, Ivar
and Jouini, Elyes
and Scheinkman, Jos{\'e} A.
and Touzi, Nizar",
title={{On the Geometry of Interest Rate Models}},
bookTitle="Paris-Princeton Lectures on Mathematical Finance 2003",
year="2004",
publisher="Springer Berlin Heidelberg",
pages="133--215",
doi="10.1007/978-3-540-44468-8_2",
url="https://doi.org/10.1007/978-3-540-44468-8_2"
}

@article{bjoerk1,
author = {Björk, Tomas and Christensen, Bent Jesper},
title = {{Interest Rate Dynamics and Consistent Forward Rate Curves}},
journal = {Mathematical Finance},
volume = {9},
number = {4},
pages = {323-348},
keywords = {forward rate curves, interest rate models, invariant manifolds, marked point processes},
doi = {https://doi.org/10.1111/1467-9965.00072},
url = {https://onlinelibrary.wiley.com/doi/abs/10.1111/1467-9965.00072},
year = {1999}
}

@TechReport{bjoerk2,
  author={Björk, Tomas and Landen, Camilla},
  title={{On the construction of finite dimensional realizations for nonlinear forward rate models}},
  year=2000,
  month=Dec,
  institution={Stockholm School of Economics},
  type={SSE/EFI Working Paper Series in Economics and Finance},
  url={https://ideas.repec.org/p/hhs/hastef/0420.html},
  number={420},
  keywords={forward rate; HJM models; term structure; factor models; state space models; Markovian realizations},
  doi={},
}

@TECHREPORT{bjoerk3,
title = {{Finite dimensional Markovian realizations for stochastic volatility forward rate models}},
author = {Björk, Tomas and Landén, Camilla and Svensson, Lars},
year = {2002},
institution = {Stockholm School of Economics},
type = {SSE/EFI Working Paper Series in Economics and Finance},
number = {498},
keywords = {HJM models; stochastic volatility; factor models; forward rates; state space models; Markovian realizations; infinite dimensional SDEs},
url = {https://EconPapers.repec.org/RePEc:hhs:hastef:0498}
}

@article{bjoerk_svensson,
  title={{On the Existence of Finite-Dimensional Realizations for Nonlinear Forward Rate Models}},
  author={Tomas Bj{\"o}rk and Lars Svensson},
  journal={Mathematical Finance},
  year={2001},
  volume={11},
  pages={205--243}
}

@article{brace_al_LIBOR,
    author = "Brace, Alan and Gatarek, Dariusz and Musiela, Marek ",
    title = {{The Market Model of Interest Rate Dynamics}},
    journal = "Mathematical Finance",
    year = 1997,
    volume = 7,
    issue = 2,
    pages={127--154}
}

@article{cuchiero_naflvr,
author = {Cuchiero, Christa and Klein, Irene and Teichmann, Josef},
title = {{A New Perspective on the Fundamental Theorem of Asset Pricing for Large Financial Markets}},
journal = {Theory of Probability \& Its Applications},
volume = {60},
number = {4},
pages = {561--579},
year = {2016},
doi = {10.1137/S0040585X97T987879},

URL = { 
    
        https://doi.org/10.1137/S0040585X97T987879
    
    

},
    
    abstract = { In the context of large financial markets we formulate the notion of no asymptotic free lunch with vanishing risk (NAFLVR), under which we can prove a version of the fundamental theorem of asset pricing (FTAP) in markets with an (even uncountably) infinite number of assets, as it is, for instance, the case in bond markets. We work in the general setting of admissible portfolio wealth processes as laid down by Kabanov in [Statistics and Control of Stochastic Processes (Moscow, 1995/1996), World Sci. Publ., River Edge, NJ, 1997, pp. 191--203] under a substantially relaxed concatenation property and adapt the FTAP proof variant obtained in [C. Cuchiero and J. Teichmann, Finance Stoch., 19 (2015), pp. 743--761] for the classical small market situation to large financial markets. In the case of countably many assets, our setting includes the large financial market model considered by De Donno, Guasoni, and Pratelli [Stochastic Process. Appl., 115 (2005), pp. 2006--2022] and its abstract integration theory. The notion of (NAFLVR) turns out to be an economically meaningful “no arbitrage” condition (in particular, not involving weak-\$*\$-closures), and (NAFLVR) is equivalent to the existence of a separating measure. Furthermore, we show by means of a counterexample that the existence of an equivalent separating measure does not lead to an equivalent \$\sigma\$-martingale measure, even in a countable large financial market situation. }
}

@book{filipovic_consistency, 
    series={Lecture Notes in Mathematics; 1760}, 
    title={{Consistency Problems for Heath-Jarrow-Morton Interest Rate Models}}, 
    url={https://infoscience.epfl.ch/handle/20.500.14299/49680}, 
    DOI={10.1007/b76888}, 
    abstractNote={Research monograph providing appropriate consistency conditions for and examples of blended models for the term structure of interest rates within the Health-Jarrow-Morton framework, combining curve-fitting methods and factor models. Softcover.}, 
    publisher={Springer}, 
    author={Filipović, Damir}, 
    year={2001}, 
    collection={Lecture Notes in Mathematics; 1760} }

@article{filipovic_discount,
 author={Filipović, Damir}, 
 title={{Discount Models}},
 year={2023},
 journal={Finance and Stochastics}, 
 series={Swiss Finance Institute Research Paper No. 23-34},
 url={https://ssrn.com/abstract=4466745}, 
 doi={10.2139/ssrn.4466745 }
}

@article{filipovic_invariant,
  title={{Invariant manifolds for weak solutions to stochastic equations}},
  author={Damir Filipovi{\'c}},
  journal={Probability Theory and Related Fields},
  year={2000},
  volume={118},
  pages={323--341},
  url={https://api.semanticscholar.org/CorpusID:6968887}
}

@article{tappe_invariant,
author = {Damir Filipović and Stefan Tappe and Josef Teichmann},
title = {{Invariant manifolds with boundary for jump-diffusions}},
volume = {19},
journal = {Electronic Journal of Probability},
number = {none},
publisher = {Institute of Mathematical Statistics and Bernoulli Society},
pages = {1 -- 28},
year = {2014},
doi = {10.1214/EJP.v19-2882},
URL = {https://doi.org/10.1214/EJP.v19-2882}
}

@phdthesis{Tappe,
author = {Tappe, Stefan},
title = {{Finite dimensional realizations for term structure models driven by semimartingales}},
school = {Humboldt-Universität zu Berlin, Mathematisch-Naturwissenschaftliche Fakultät II},
year = {2005},
doi = {http://dx.doi.org/10.18452/15369}
}

@article{tappe2,
author = {Tappe, Stefan},
title = {{Existence of affine realizations for stochastic partial differential equations driven by L\'evy processes}},
journal = {Proceedings of the Royal Society A: Mathematical, Physical and Engineering Sciences},
volume = {471},
number = {2178},
year = {2015},
doi = {10.1098/rspa.2015.0104},

URL = {https://royalsocietypublishing.org/doi/abs/10.1098/rspa.2015.0104},

    abstract = { The goal of this paper is to clarify when a semilinear stochastic partial differential equation driven by Lévy processes admits an affine realization. Our results are accompanied by several examples arising in natural sciences and economics. }
}

@article{tappe4,
 URL = {http://www.jstor.org/stable/20779291},
 author = {Stefan Tappe},
 journal = {Proceedings: Mathematical, Physical and Engineering Sciences},
 number = {2122},
 pages = {3033--3060},
 publisher = {The Royal Society},
 title = {{An alternative approach on the existence of affine realizations for HJM term structure models}},
 volume = {466},
 year = {2010}
}

@misc{tappe_invariant_appendix,
    title={{Stochastic partial differential equations and submanifolds in Hilbert spaces}},
    author={Damir Filipovi\'c and Stefan Tappe and Josef Teichmann},
    year={2014},
    eprint={1202.1076v2},
    archivePrefix={arXiv},
    primaryClass={math.PR},
    editora = {{Invariant manifolds with boundary for jump-diffusions}},
    editoratype = {withappendix}
}

@article{tappe_existence,
author = {Tappe, Stefan},
title = {{Some Refinements of Existence Results for SPDEs Driven by Wiener Processes and Poisson Random Measures}},
journal = {International Journal of Stochastic Analysis},
volume = {2012},
number = {1},
pages = {236--327},
doi = {https://doi.org/10.1155/2012/236327},
url = {https://onlinelibrary.wiley.com/doi/abs/10.1155/2012/236327},
abstract = {We provide existence and uniqueness of global (and local) mild solutions for a general class of semilinear stochastic partial differential equations driven by Wiener processes and Poisson random measures under local Lipschitz and linear growth (or local boundedness, resp.) conditions. The so-called “method of the moving frame” allows us to reduce the SPDE problems to SDE problems.},
year = {2012}
}

@Article{hjm,
  author={Heath, David and Jarrow, Robert and Morton, Andrew},
  title={{Bond Pricing and the Term Structure of Interest Rates: A New Methodology for Contingent Claims Valuation}},
  journal={Econometrica},
  year=1992,
  volume={60},
  number={1},
  pages={77--105},
  url={https://ideas.repec.org/a/ecm/emetrp/v60y1992i1p77-105.html}
}

@article{musiela,
	author={Musiela, Marek},
	title={{Stochastic PDEs and term structure models}},
	journal={Journees Internationales de Finance},
	publisher={IGR-AFFI},
	year={1993},
}

@article{filip,
	author = {Filipović, Damir and Teichmann, Josef},
	year = {2001},
	month = {07},
	journal = {arXiv: Probability},
	title = {{Finite dimensional Realizations of Stochastic Equations}}
}

@ARTICLE{teich,
    author = {Damir Filipović and Josef Teichmann},
    title = {{On the geometry of the term structure of interest rates}},
    journal = {Proceedings of The Royal Society of London. Series A. Mathematical, Physical and Engineering Sciences},
    year = {2004}
}

@article{teichmann1,
author = {Filipović, Damir and Teichmann, Josef},
year = {2003},
month = {02},
pages = {398--432},
title = {{Existence of Invariant Manifolds for Stochastic Equations in Infinite Dimension}},
volume = {197},
journal = {Journal of Functional Analysis},
doi = {10.1016/S0022-1236(03)00008-9}
}

@article{richter_teichmann,
author = {Richter, Anja and Teichmann, Josef},
title = {{Discrete Time Term Structure Theory and Consistent Recalibration Models}},
journal = {SIAM Journal on Financial Mathematics},
volume = {8},
number = {1},
pages = {504--531},
year = {2017},
doi = {10.1137/15M1007434},

URL = { 
    
        https://doi.org/10.1137/15M1007434
    
    

}
}

@misc{shijie,
      title={{Statistically consistent term structures have affine geometry}}, 
      author={Paul Krühner and Shijie Xu},
      year={2023},
      eprint={2308.02246},
      archivePrefix={arXiv},
      primaryClass={q-fin.MF}
}

@book{zabczyk, 
	place={Cambridge}, 
	series={Encyclopedia of Mathematics and its Applications}, 
	title={{Stochastic Equations in Infinite Dimensions}}, 
	publisher={Cambridge University Press}, 
	author={Da Prato, Guiseppe and Zabczyk, Jerzy}, 
	year={1992}, 
	collection={Encyclopedia of Mathematics and its Applications}
}

@article{zabczyk2,
author = {Zabczyk, Jerzy},
journal = {Atti della Accademia Nazionale dei Lincei. Classe di Scienze Fisiche, Matematiche e Naturali. Rendiconti Lincei. Matematica e Applicazioni},
keywords = {Stochastic equations; Invariant sets; Forward curve; Consistency; stochastic equations; invariant sets; forward curve; consistency},
language = {eng},
month = {6},
number = {2},
pages = {67--80},
publisher = {Accademia Nazionale dei Lincei},
title = {{Stochastic invariance and consistency of financial models}},
url = {http://eudml.org/doc/252313},
volume = {11},
year = {2000},
}
\end{document}